        \theoremstyle{definition}
        \newtheorem{definition}[theorem]{Definition}
        \newtheorem{example}{Example}
        \colorlet{comment}{cud-bluish-green!30!black!70}
        \renewcommand{\Comment}{\hfill{\footnotesize/\!\!/} }
        \renewcommand{\inlinecomment}[1]{{\color{comment} \Comment {\footnotesize #1} \normalsize}}
        \newcommand{\seccomment}[1]{%
            \vskip0.5em%
            \State {\color{comment} \footnotesize \(\blacksquare\) \emph{#1}} \normalsize
        }
    \newcommand{\figurefolder}{../fig}
    \newcommand{\pperp}{\perp\!\!\!\!\perp}
    \newcommand{\rref}{\mathrm{ref}}
    \DeclareSymbolFont{extraup}{U}{zavm}{m}{n}
    \DeclareMathSymbol{\varheart}{\mathalpha}{extraup}{86}
    \newcommand{\happy}{F}
    \newcommand{\lin}{\mathrm{L}}
    \newcommand{\unitary}{\mathrm{U}}
    \newcommand{\chan}{\mathrm{C}}
    \newcommand{\dens}{\mathrm{D}}
    \newcommand{\hil}{\mathcal{H}}
    \newcommand{\CC}{\mathbb{C}}
    \newcommand{\Conv}{\mathrm{Conv}}
    \newcommand{\LConst}{\mathcal{L}}
\renewcommand{\figurefolder}{fig}
\begin{document}

\title{Bayesian ACRONYM Tuning}
\date{authors in alphabetical order}

\author{John Gamble}
    \affilMSRQuArC

\author{Chris Granade}
    \affilMSRQuArC

\author{Nathan Wiebe}
    \affilMSRQuArC

\begin{abstract}
We provide an algorithm that uses Bayesian randomized benchmarking in concert with a local optimizer, such as SPSA, to find a set of controls that optimizes that average gate fidelity.
We call this method Bayesian ACRONYM tuning as a reference to the analogous ACRONYM tuning algorithm.  
Bayesian ACRONYM distinguishes itself in its ability to retain prior information from experiments that use nearby control parameters; whereas traditional ACRONYM tuning does not use such information and can require many more measurements as a result.
We prove that such information reuse is possible under the relatively weak assumption that the true model parameters are Lipshitz-continuous functions of the control parameters.  
We also perform numerical experiments that demonstrate that over-rotation errors in single qubit gates can be automatically tuned from $88\%$ to $99.95\%$ average gate fidelity using less than $1kB$ of data and fewer than $20$ steps of the optimizer.
\end{abstract}

\maketitle

\section{Introduction}

Tuning gates in quantum computers is a task of fundamental importance to building a quantum computer.
Without tuning, most quantum computers would have insufficient accuracy to implement a simple algorithm let alone achieve the stringent requirements on gate fidelity imposed by quantum error correction~\cite{fowler2009high,cross_comparative_2007}.  
Historically, qubit tuning has largely been done by experimentalists refining an intelligent initial guess for the physical parameters by hand to account for the ideosyncracies of the device.
Recently, alternatives have been invented that allow devices to be tuned in order to improve performance on real-world estimates of gate quality.
These methods, often based on optimizing quantities such as average gate fidelities, are powerful but come with two drawbacks.
At present all such methods require substantial input data to compute the average gate fidelity and estimate its gradient, and at present no method can use information from the history of an optimization procedure to reduce such data needs.
Our approach, which we call Bayesian ACRONYM tuning (or BACRONYM), addresses these problems.

BACRONYM is based strongly on the ACRONYM protocol invented by Ferrie and Moussa~\cite{fm_robust_2015}.
There are two parts to the ACRONYM gate tuning protocol.
The first uses randomized benchmarking \cite{magesan_characterizing_2012} to obtain an estimate of gate fidelity as a function of the controls.
The second optimizes the average gate fidelity using a local optimizer such as Nelder-Mead or stochastic gradient descent.
While many methods can be used to estimate the average gate fidelity, randomized benchmarking is of particular significance because of its ability to give an efficient estimate of the average gate fidelity under reasonable assumptions~\cite{pry+_what_2017}, and because of its amenability to experimental application \cite{heeres_implementing_2016}.
The algorithm then uses a protocol, similar to SPSA~\cite{spa_multivariate_1992}, to optimize the estimate of the gate fidelity by changing the experimental controls and continues to update the parameters until the desired tolerance is reached.

The optimization used in ACRONYM simply involves varying a parameter slightly and applying the fidelity estimation protocol from scratch every time.  
When the a quantum system is evaluated at two nearby points in parameter space, an operation performed repeatedly in descent algorithms, the objective function does not typically change much in practice.
Since ACRONYM does not take this into account, it requires more data than is strictly needed.
Thus, if ACRONYM could be modified to use prior information extracted from the previous iteration in SPSA, the data needed to obtain an estimate of the gradient can be reduced. 

Bayesian methods provide a natural means to use prior information within parameter estimation and have been used previously to analyze randomized benchmarking experiments.
These methods, yield estimates of the average gate fidelity based on prior beliefs of about the randomized benchmarking parameters as well as the evidence obtained experimentally~\cite{gfc_accelerated_2015}.   
To use a Bayesian approach, we begin by taking as input a probability distribution for the average gate fidelity ($\AGF$) as function of the control parameters $\vec{\theta}$,  $\Pr(\AGF|\vec{\theta})$. 
This is our prior belief about the average gate fidelity.
In addition to a prior, we need a method for computing the likelihood of witnessing a set of experimental evidence $E$. 
This is known as the likelihood function; in the case of Bayesian randomized benchmarking, it is $\Pr(E|\AGF;\vec{\theta})$.
Given these as input, we then seek to output an approximation to the posterior probability distribution, \emph{i.e.,} the probability with which the AGF takes a specific value conditioned on our prior belief and $E$. 
To accomplish this, we use Bayes' theorem, which states that
\begin{equation}
    \Pr(\AGF|E;\vec{\theta}) = \frac{\Pr(\AGF|\theta) \Pr(E|\AGF;\theta)}{\Pr(E|\theta)},\label{eq:Bayes}
\end{equation}
where $\Pr(E|\theta)$ is just a normalization constant.
From the posterior distribution $\Pr(\AGF|E;\vec{\theta})$ we can then extract a point estimate of the $\AGF$  (by taking the mean) or estimate its uncertainty (by computing the variance).

Our work combines these two ideas to show that provided the quantum channels that describe the underlying gates are continuous functions of the control parameters then the uncertainty in parameters like~$\AGF$ that occurs from transitioning from $\vec{\theta}\rightarrow \vec{\theta'}$ in the optimization process is also a continuous function of $\|\vec{\theta} - \vec{\theta'}\|$.
This gives us a rule that we can follow to argue how much uncertainty we have to add to our posterior distribution $\Pr(\AGF|E;\vec{\theta})$ to use it as a prior $\Pr(\AGF|\vec{\theta'})$ at the next step of the gradient optimization procedure.

\subsection{Notation}

The notation that we use in this paper necessarily spans several fields, most notably Bayesian inference and randomized benchmarking theory.
Here we will introduce the necessary notation from these fields in order to understand our results.
For any distribution $\Pr(\vec{x})$ over a vector $\vec{x}$ of random variables, we write $\supp(\Pr(\vec{x}))$ to mean the set of vectors $\vec{x}$ such that $\Pr(\vec{x})>0$.
When it is clear from context, we will write $\supp(\vec{x} | \vec{y})$ in place of $\supp(\Pr(\vec{x} | \vec{y}))$.

Let $\hil = \CC^{d}$ be a finite-dimensional Hilbert space describing the states of a quantum system of interest, and let $\lin(\hil)$ be the set of linear operators acting on $\hil$.
Let $\Herm(\hil) \subsetneq \lin(\hil)$ and $\unitary(\hil) \subsetneq \lin(\hil)$ be the sets of Hermitian and unitary operators acting on $\hil$, respectively.
For the most part, however, we are not concerned directly with pure states $\ket{\psi} \in \hil$, but with classical distributions over such states, described by density operators $\rho \in \dens(\hil) \subsetneq \Herm(\hil) \subsetneq \lin(\hil)$.
Whereas $\hil$ transforms under $\unitary(\hil)$ by left action, $\dens(\hil)$ transforms under $\unitary(\hil)$ by the \emph{group action} $\bullet : \unitary(\hil) \times \lin(\hil) \to \lin(\hil)$, given by $U \bullet \rho \defeq U\rho U^\dagger$.
We note that $\bullet$ is linear in its second argument, such that for a particular $U \in \unitary(\hil)$, $U \bullet : \lin(\hil) \to \lin(\hil)$ is a linear function.
We thus write that $U \bullet {} \in \lin(\lin(\hil))$.
Moreover, since $U \bullet {}$ is a completely positive and trace preserving map on $\lin(\hil)$, we say that $U \bullet {}$ is a \emph{channel} on $\hil$, written $\chan(\hil) \subsetneq \lin(\lin(\hil)) \subsetneq \lin(\hil) \to \lin(\hil)$.
More generally, we take $\chan(\hil)$ to be the set of all such completely positive and trace preserving maps acting on $\lin(\hil)$.

\subsection{Problem Description}

Before proceeding further, it is helpful to carefully define the problem that we address with BACRONYM.
In particular, let $G = \langle V_0, \dots, V_{\ell - 1} \rangle \subsetneq \unitary(\hil)$ be a group and a unitary 2-design \cite{dankert_exact_2006}, such that $G$ is appropriate for use in standard randomized benchmarking.
Often, $G$ will be the Clifford group acting on a Hilbert space of dimension $d$, but smaller twirling groups may be chosen in some circumstances \cite{ian_pc}.
We will consider that the generator $T$ is a gate, which we would like to tune to be $V_0$ without loss of generality, as a function of a vector $\vec{\theta}$ of control parameters, such that $T = T(\vec{\theta})$.
We write that $V_i \pperp \vec{\theta}$ for all $i \ge 0$ to indicate that the generators $\{V_0, \dots, V_{\ell - 1}\}$ are not functions of the controls $\vec{\theta}$ (note that $V_0$ is manifestly not a function of the controls because it represents the ideal action).
Nonetheless, it is often convenient to write that $V_i = V_i(\vec{\theta)})$ with the understanding that $\partial_{\theta_j} V_i = 0$ for all $i \ge 0$ and for all control parameters $\theta_j$.

In order to reason about the errors in our implementation of each generator, we will write that the imperfect implementation $\tilde{V} \in \chan(\hil)$ of a generator $V \in \{V_0, \dots, V_{\ell - 1}\}$ is defined as
\begin{align}
    \tilde{V} & = \Lambda_V (V \bullet {}) \label{eq:Lambda_V_def} \\
\intertext{which acts on $\rho$ as }
    \tilde{V}[\rho] & = \Lambda_V[V\rho V^{\dagger}],
\end{align}
where $\Lambda_V$ is the \emph{discrepancy channel} describing the errors in $V$.
Note that for an ideal implementation, $\Lambda_V$ is the identity channel.

We extend this definition to arbitrary elements of $G$ in a straightforward fashion.
Let $U \defeq \prod_{i \in \vec{i}(U)} V_{i}$, where $\vec{i}(U)$ is the sequence of indices of each generator in the decomposition of $U$.
For instance, if $G = \langle H, S \rangle$ for the phase gate $S = \diag(1, \ii)$, then $\sqrt{X} = HSH$ is represented by $\vec{i}(U) = (0, 1, 0)$.
Combining the definition of $U$ with Eq.~(\ref{eq:Lambda_V_def}), the imperfect composite action $\tilde{U}$ is
\begin{align}
    \tilde{U} &= \prod_{i \in \vec{i}(U)} \tilde{V_i}
	 	 = \prod_{i \in \vec{i}(U)} \Lambda_{V_i} (V_i \bullet {})
		 \defeq \Lambda_U (U \bullet),
\end{align}
where the final point defines the composite discrepancy channel $\Lambda_U$.
By rearranging the equation above, we obtain
\begin{align}
    \label{eq:discrepancy-u}
    \Lambda_U = \tilde{U} (U^\dagger \bullet {}) =
        \left(\prod_{i \in \vec{i}(U)} \Lambda_{V_i} (V_i \bullet {}) \right)
        \left(U^\dagger \bullet {} \right).
\end{align}
Returning to the example $\sqrt{X} = HSH$, we thus obtain that
\begin{align}
    \Lambda_{\sqrt{X}} = \Lambda_H (H \bullet {}) \Lambda_S (S \bullet {}) \Lambda_H (H \bullet {})
                         ((H^\dagger S^\dagger H^\dagger) \bullet {})
\end{align}
is the discrepancy channel describing the noise incurred if we implement $\widetilde{\sqrt{X}}$ as the sequence $\tilde{H}\tilde{S}\tilde{H}$.

Equipped with the discrepancy channels for all elements of $G$, we can now concretely state the parameters of interest to randomized benchmarking over $G$.
Standard randomized benchmarking without sequence reuse \cite{gfc_accelerated_2015}, in the limit of long sequences \cite{wal_randomized_2017}, depends only on the state preparation and measurement (SPAM) procedure and on the average gate fidelity $\AGF(\Lambda_{\rref})$, where
\begin{align}
    \label{eq:ref-channel-defn}
    \Lambda_{\rref} \defeq \expect_{U \sim \Uni(G)} [\Lambda_U] = \frac{1}{|G|} \sum_{U \in G} \Lambda_U
\end{align}
is the reference discrepancy channel, obtained by taking the expectation value of the discrepancy channel $\Lambda_U$ over $U$ sampled uniformly at random from $G$, and where the average gate fidelity is given by the expected action of a channel $\Lambda$ over the Haar measure $\dd\psi$,
\begin{align}
    \AGF(\Lambda) \defeq \int \dd{\psi} \braket{
        \psi \mid
        \Lambda(\ket{\psi}\bra{\psi})
        \mid \psi
    }.
\end{align}
When discussing the quality of a particular generator, say $T\defeq \tilde{V_0}$, we unfortunately cannot directly access  $\AGF(\Lambda_{T})$ experimentally.
However, interleaved randomized benchmarking allows us to rigorously estimate $\AGF(\Lambda_{T} \Lambda_{\rref})$ in the limit of long sequences and without sequence reuse.

Our goal here is to find a set of control parameters that optimizes $\AGF(\Lambda_{T} \Lambda_{\rref})$. 
To state this more formally, suppose that $T$ is a function of a vector $\vec{\theta}$ of control parameters such that $T = T(\vec{\theta})$.
For all ideal generators, we write that $V_i \pperp \vec{\theta}$ for all $i \ge 0$ to indicate that the other generators $\{V_0, \dots, V_{\ell - 1}\}$ are not functions of the controls $\vec{\theta}$.
We also assume that $\tilde V_i \pperp \vec{\theta}$ for all $i > 0$, so that $T(\vec{\theta}) = \Lambda_{V_0} (\vec \theta ) V_0$ is the sole generator we are optimizing.
We therefore aim to find $\vec{\theta}$ such that $\vec{\theta} = {\rm argmax} \left(\AGF(\Lambda_{T(\vec{\theta})} \Lambda_{\rref})\right)$.

This problem has previously been considered by \citet{ew_adaptive_2014} and later by~\citet{kelly2014optimal}, who proposed the use of interleaved randomized benchmarking with least-squares fitting to implement an approximate oracle for $\AGF(\Lambda_T(\vec{\theta}) \Lambda_{\rref}(\vec{\theta}))$.
Taken together with the bounds showed by \citet{mgj+_efficient_2012}~and later improved by \citet{kdr+_robust_2014}, this approximate oracle provides an approximate lower bound on $\AGF(\Lambda_{\rref}(\vec{\theta}))$.
This lower bound can then be taken as an objective function for standard optimization routines such as Nelder--Mead to yield a ``fix-up'' procedure that improves gates based on experimental evidence.
\citet{fm_robust_2015} showed an improvement in this procedure by the use of an optimization algorithm that is more robust to the approximations incurred by the use of finite data in the underlying randomized benchmarking experiments.
In particular, the simultaneous pertubative stochastic approximation (SPSA) \cite{spa_multivariate_1992}, while less efficient for optimizing exact oracles, can provide dramatic improvements in approximate cases such as that considered by \citet{fm_robust_2015}.
This advantage has been further shown in other areas of quantum information, such as in tomography \cite{fer_self_2014,cfa_experimental_2016}.

We improve this result still further by using a Lipschitz continuity assumption on the dependence of $\Lambda_T$ on $\vec{\theta}$ to propagate prior information between optimization iterations.
This assumption is physically well-motivated: it reflects a desire that our control knobs have a smooth (but not known) influence on our generators.
Since small gradient steps cannot greatly modify the average gate fidelity of interest under such a continuity assumption, the prior distribution for each randomized benchmarking experiment is closely related to the posterior distribution from the previous optimization iteration.

Recent work has shown, however, that this approach faces two significant challenges.
First, the work of \citet{pry+_what_2017} has shown explicit counterexamples in which reconstructing $\AGF(\Lambda_{T}(\vec{\theta}))$ from $\AGF(\Lambda_{T}(\vec{\theta}) \Lambda_{\rref}(\vec{\theta}))$ can yield very poor estimates due to the gauge dependence of this inverse problem.
Second, the work of \citet{hwf+_bayesian_2018} has shown that the statistical inference problem induced by randomized benchmarking becomes considerably more complicated with sequence reuse, and in particular, depends on higher moments such as the unitarity \cite{wghf_estimating_2015}.
While the work of \citet{hwf+_bayesian_2018}~provides the first concrete algorithm that allows for learning randomized benchmarking parameters with sequence reuse, we will consider the single-shot limit to address the \citet{pry+_what_2017} argument, as this is the unique randomized benchmarking protocol that provides gauge invariant estimates of $\AGF(\Lambda_{T}(\vec{\theta}) \Lambda_{\rref}(\vec{\theta}))$ \cite{rpz_gauge_2017}, and as this model readily generalizes to include the effects of error correction \cite{cgff_logical_2017}.

In this work, we adopt as our objective function
\begin{align}
    \happy(\vec{\theta}) \defeq \AGF(\Lambda_{T}(\vec{\theta}) \Lambda_{\rref}(\vec{\theta})).
\end{align}
This choice of objective represents that we want to see improvements in the interleaved average gate fidelity, regardless of whether they occur from a more accurate target gate or a more accurate reference channel.
In practice, these two contributions to our objective function can be teased apart by the use of more complete protocols such as gateset tomography~\cite{merkel2013self,blume2017demonstration}.
We proceed in three steps.
First, we demonstrate that the Lipschitz continuity of $\Lambda_T(\vec{\theta})$ implies the Lipschitz continuity of $\happy(\vec{\theta})$.
We then proceed to show that this implies an upper bound on $\Var[\happy(\vec{\theta} + \vec{\delta\theta}) | \text{data}]$ in terms of $\Var[\happy(\vec{\theta}) | \text{data}]$, such that we can readily produce estimates $\hat{\happy}(\vec{\theta})$ at each step of an optimization procedure, while reusing much of our data to accelerate the process.
Finally, we conclude by presenting a numerical example for a representative model to demonstrate how BACRONYM may be used in practice.

\section{Lipschitz Continuity of $\happy(\vec{\theta})$}
Proving Lipshitz continuity of the objective function is an important first step towards arguing that we can reuse information during BACRONYM's optimization process.
We need this fact because if the objective function were to vary unpredictably at adjacent values of the controls then finding the optima would reduce to an unstructured search problem, which cannot be solved efficiently.
Our aim is to first argue that continuity of $\Lambda$ implies continuity of $\happy$.  
We then will use this fact to argue about the maximum amount that the posterior variance can grow as the control parameters are updated, which will allow us to quantify how to propagate uncertainties of $\happy$ at adjacent points later.
We begin by recalling the definition of Lipschitz continuity for functions acting on vectors.

\begin{definition}[Lipschitz continuity]
    \label{def:lipschitz-functions}
    Given a Euclidean metric space $S$, a function $f : S \to \mathbb{R}$ is said to be Lipschitz continuous if there exists $\LConst \ge 0$ such that for all $\vec{x}, \vec{y} \in S$,
    \begin{align}
        |f(\vec{x}) - f(\vec{y})| \le \LConst \|\vec{x} - \vec{y}\|.
    \end{align}
    If not otherwise stated, we will assume $\| \cdot \|$ on vectors to be the Euclidean norm $\| \cdot \|_2$.
\end{definition}

As an example, $f(x) = \sqrt{x}$ is not Lipschitz continuous on [0,1], but any differentiable function on a closed, bounded interval of the real line is.
We now generalize the notion of Lipschitz continuity to channels.
Let $\lin(\hil)$ be the set of all linear operators acting on the Hilbert space $\hil$, and let $\lin(\lin(\hil))$ be the set of linear operators acting on all such linear operators (often referred to as \emph{superoperators}).

\begin{definition}[Lipschitz continuity of channels]
    \label{def:lipschitz-channels}
    Given a metric space $S$ and a Hilbert space $\hil$, we say that a function $\Lambda : S \to \lin(\lin(\hil))$ is $\LConst$-continuous or Lipschitz continuous in the $\star$ distance if there exists $\LConst \ge 0$ such that for all $\vec{x}, \vec{y} \in S$ and $\rho \in \dens(\hil)$,
    \begin{align}
        \| \Lambda(\vec{x})[\rho] - \Lambda(\vec{y})[\rho] \|_\star \le \LConst \|\vec{x} - \vec{y}\|.
    \end{align}
    If not specified explicitly, the trace norm $\| \cdot \| = \| \cdot \|_{\Tr}$ is assumed for operators in $\lin(\hil)$.
\end{definition}

From the definition, we immediately can show the following:

\begin{lemma}[Composition of Lipschitz continuous channels]
    \label{lem:compose-chan-lip}
    Let $\Lambda, \Phi : S \to \lin(\lin(\hil))$ be Lipschitz continuous in the trace distance with constants $\LConst$ and $\mathcal{M}$, respectively.
    Then, $(\Phi \Lambda) : \vec{x} \mapsto \Phi(\vec{x}) \Lambda(\vec{x})$ is Lipschitz continuous in the trace distance with constant $\LConst + \mathcal{M}$.
\end{lemma}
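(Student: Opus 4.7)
The plan is to prove this by the standard add-and-subtract trick followed by the triangle inequality, using the fact that CPTP maps are contractive in the trace norm on Hermitian operators. Pick any $\vec{x}, \vec{y} \in S$ and any $\rho \in \dens(\hil)$, and write
\begin{align}
    (\Phi\Lambda)(\vec{x})[\rho] - (\Phi\Lambda)(\vec{y})[\rho]
    = \Phi(\vec{x})\bigl[\Lambda(\vec{x})[\rho] - \Lambda(\vec{y})[\rho]\bigr] + \bigl(\Phi(\vec{x}) - \Phi(\vec{y})\bigr)\bigl[\Lambda(\vec{y})[\rho]\bigr].
\end{align}
Taking trace norms on both sides and invoking the triangle inequality splits the bound into two pieces which can be controlled separately by the two Lipschitz hypotheses.

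For the first piece, I would observe that $\sigma \defeq \Lambda(\vec{x})[\rho] - \Lambda(\vec{y})[\rho]$ is Hermitian (each $\Lambda(\cdot)$ being a channel is Hermiticity-preserving), and that $\Phi(\vec{x})$, being a channel, is trace-norm contractive on Hermitian operators. Hence
\begin{align}
    \bigl\| \Phi(\vec{x})[\sigma] \bigr\|_{\Tr} \le \|\sigma\|_{\Tr} = \bigl\| \Lambda(\vec{x})[\rho] - \Lambda(\vec{y})[\rho] \bigr\|_{\Tr} \le \LConst \|\vec{x} - \vec{y}\|
\end{align}
by the Lipschitz continuity of $\Lambda$. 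For the second piece, I would note that $\Lambda(\vec{y})[\rho] \in \dens(\hil)$, so the Lipschitz hypothesis on $\Phi$ applies directly at this density operator, yielding
\begin{align}
    \bigl\| (\Phi(\vec{x}) - \Phi(\vec{y}))[\Lambda(\vec{y})[\rho]] \bigr\|_{\Tr} \le \mathcal{M} \|\vec{x} - \vec{y}\|.
\end{align}
Adding the two bounds gives $\|(\Phi\Lambda)(\vec{x})[\rho] - (\Phi\Lambda)(\vec{y})[\rho]\|_{\Tr} \le (\LConst + \mathcal{M})\|\vec{x} - \vec{y}\|$, which is exactly the claimed Lipschitz constant.

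The only subtlety worth flagging is the use of trace-norm contractivity of $\Phi(\vec{x})$ in bounding the first piece: this step genuinely needs $\Phi(\vec{x})$ to be a channel rather than an arbitrary element of $\lin(\lin(\hil))$, and it uses that $\sigma$ is Hermitian (so that we do not pay a factor of $\sqrt{2}$ from splitting into Hermitian and anti-Hermitian parts). Both requirements are met under the physical interpretation of $\Lambda$ and $\Phi$ as channel-valued maps, which is the setting of interest throughout the paper; it would be worth stating this explicitly in the proof so that the hypothesis ``channels'' in the lemma's name is tracked. No other step is delicate: the result follows from one triangle inequality, one contractivity bound, and one direct appeal to each Lipschitz hypothesis.
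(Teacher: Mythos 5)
Your proof is correct and follows essentially the same route as the paper's: the same add-and-subtract decomposition, the triangle inequality, the Lipschitz hypothesis on $\Phi$ applied at the density operator $\Lambda(\vec{y})[\rho]$, and trace-norm contractivity of the channel $\Phi(\vec{x})$ (which the paper justifies via Helstrom's theorem). Your explicit flagging of where channel-hood, rather than mere linearity, is needed is a welcome clarification of a step the paper leaves terse.
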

\begin{proof}
 The proof of the lemma follows immediately after a few applications of the triangle inequality under the assumption of continuity of the individual channels.
    \begin{align*}
        \| (\Phi \Lambda)(\vec{x})[\rho] - (\Phi \Lambda)(\vec{y})[\rho] \|_{\Tr} & =
            \| \Phi(\vec{x}) [\Lambda(\vec{x}) [\rho]] - \Phi(\vec{y}) [\Lambda(\vec{y})  \rho]] \|_{\Tr} \\
        & =
            \| \Phi(\vec{x}) [\Lambda(\vec{x}) [\rho]] - \Phi(\vec{x}) [\Lambda(\vec{y}) [\rho]] + \Phi(\vec{x}) [\Lambda(\vec{y}) [\rho]] - \Phi(\vec{y}) [\Lambda(\vec{y}) [\rho]] \|_{\Tr} \\
        & \le
            \| \Phi(\vec{x}) [\Lambda(\vec{x}) [\rho]] - \Phi(\vec{x}) [\Lambda(\vec{y}) [\rho]] \|_{\Tr} +
            \| \Phi(\vec{x}) [\Lambda(\vec{y}) [\rho]] - \Phi(\vec{y}) [\Lambda(\vec{y}) [\rho]] \|_{\Tr} \\
        & \le
            \| \Phi(\vec{x}) [\Lambda(\vec{x}) [\rho]] - \Phi(\vec{x}) [\Lambda(\vec{y}) [\rho]] \|_{\Tr} +
            \mathcal{M} \|\vec{x} - \vec{y}\| \\
        & \le
            \| \Lambda(\vec{x}) [\rho] - \Lambda(\vec{y}) [\rho] \|_{\Tr} +
            \mathcal{M} \|\vec{x} - \vec{y}\| \\
        & \le
            \LConst \|\vec{x} - \vec{y}\| +
            \mathcal{M} \|\vec{x} - \vec{y}\|,
        \end{align*}
    where the second-to-last line follows from contradiction on Helstrom's theorem \cite{wat_theory_2018}.
\end{proof}

We note that the above lemma immediately implies that if $\Lambda(\vec{\theta})$ is Lipschitz continuous in the trace distance with constant $\LConst$, then so is $(\Phi \Lambda)(\vec{\theta})$ for any channel $\Phi \pperp \vec{\theta}$, since $\Phi$ can be written as a channel that is Lipschitz continuous in the trace distance with constant $0$.

\begin{corollary}[Composition of multiple Lipschitz continuous functions and channels]
\label{cor:comp-mult-channels}
Let $\Lambda_0, \Lambda_1,...,\Lambda_k : S \to \lin(\lin(\hil))$ be Lipschitz continuous in the trace distance with constants $\LConst_i$ with $i \in [0,1,...,k]$. Then, $(\Lambda_0 \Lambda_1 \cdots \Lambda_k) : \vec{x} \mapsto \Lambda_0(\vec{x}) \Lambda_1(\vec{x})\cdots \Lambda_k(\vec{x}) $ is Lipschitz continuous in the trace distance with constant $\sum_{i = 0}^k \LConst_i$.
\end{corollary}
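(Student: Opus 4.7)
The plan is to prove the corollary by straightforward induction on the number of composed channels, using the previous lemma as both the base case and the inductive step. The corollary is just an iterated application of the pairwise composition result, so no new ideas are required.

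First I would set up the induction on $k$. The base case $k=1$ (two channels $\Lambda_0, \Lambda_1$) is exactly the statement of Lemma~\ref{lem:compose-chan-lip}, which gives a Lipschitz constant $\LConst_0 + \LConst_1$ in the trace distance. For the inductive step, I would assume the claim holds for any $k$ channels and then apply it to the product $\Phi \defeq \Lambda_1 \Lambda_2 \cdots \Lambda_k$, which by the inductive hypothesis is Lipschitz continuous in the trace distance with constant $\mathcal{M} \defeq \sum_{i=1}^{k} \LConst_i$. Then, applying Lemma~\ref{lem:compose-chan-lip} one more time to the pair $(\Lambda_0, \Phi)$, I obtain that $\Lambda_0 \Phi = \Lambda_0 \Lambda_1 \cdots \Lambda_k$ is Lipschitz continuous in the trace distance with constant $\LConst_0 + \mathcal{M} = \sum_{i=0}^{k} \LConst_i$, which completes the induction.

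The one subtle point worth checking, though it is not really an obstacle, is that Lemma~\ref{lem:compose-chan-lip} as stated assumes both composed maps take values in $\lin(\lin(\hil))$ and are Lipschitz in the trace distance when applied to an arbitrary density operator $\rho$. I need to verify that the intermediate product $\Phi(\vec{x}) = \Lambda_1(\vec{x}) \cdots \Lambda_k(\vec{x})$ indeed remains a map from $S$ into $\lin(\lin(\hil))$, which follows since $\lin(\lin(\hil))$ is closed under composition, and that its Lipschitz property applied to arbitrary $\rho \in \dens(\hil)$ is exactly what the inductive hypothesis supplies. Since the original lemma's proof only required the second channel in the pair to be applied to $\rho$ (and the first to $\Lambda(\vec{y})[\rho]$, which need not itself be a density operator but is bounded in trace norm), the argument carries through without further modification.

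In short, there is no real obstacle; the hard work was already done in proving Lemma~\ref{lem:compose-chan-lip}, and the corollary follows by repeated application. The proof can be stated in just a few lines.
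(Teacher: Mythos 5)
Your proof is correct and matches the paper's intent exactly: the paper states this corollary without proof as an immediate consequence of Lemma~\ref{lem:compose-chan-lip}, and your induction simply makes that iteration explicit (your side remark about the intermediate product remaining a channel, so that trace-norm contractivity still applies, is the right thing to check and holds). Nothing further is needed.
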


\begin{lemma}
    \label{lem:convex-chan-lip}
    Let $\Lambda : S \to \lin(\lin(\hil))$ be a convex combination of channels,
    \begin{align}
        \Lambda(\vec{\theta}) & = \sum_i p_i \Lambda_i(\vec{\theta}),
    \end{align}
    where $\{p_i\}$ are nonnegative real numbers such that $\sum_i p_i = 1$, and where each $\Lambda_i : S \to \lin(\lin(\hil))$ is Lipschitz continuous in a norm $\|\cdot\|_\star$ with constant $\LConst_i$.
    Then, $\Lambda$ is Lipschitz continuous with constant $\bar{\LConst} = \sum_i p_i \LConst_i$.
\end{lemma}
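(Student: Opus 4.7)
The plan is to compute $\|\Lambda(\vec{x})[\rho] - \Lambda(\vec{y})[\rho]\|_\star$ directly by substituting the convex decomposition and then appealing to the triangle inequality. Specifically, I would first use linearity in $\rho$ (together with linearity of the difference) to write
\begin{align*}
    \Lambda(\vec{x})[\rho] - \Lambda(\vec{y})[\rho] = \sum_i p_i \bigl( \Lambda_i(\vec{x})[\rho] - \Lambda_i(\vec{y})[\rho] \bigr).
\end{align*}
Then apply the triangle inequality for $\|\cdot\|_\star$, using that the $p_i \ge 0$ are nonnegative so that $\|p_i X\|_\star = p_i \|X\|_\star$, to pull the sum outside the norm.

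Next I would invoke the hypothesis that each $\Lambda_i$ is Lipschitz continuous with constant $\LConst_i$ in $\|\cdot\|_\star$, which bounds every summand by $p_i \LConst_i \|\vec{x} - \vec{y}\|$. Collecting the common factor of $\|\vec{x} - \vec{y}\|$ then yields the claimed constant $\bar{\LConst} = \sum_i p_i \LConst_i$. The structure is nearly identical to the first few lines of the proof of \autoref{lem:compose-chan-lip}, but simpler because no cross-terms are introduced.

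There is no real obstacle here; the only subtlety worth noting in the write-up is that the argument works for arbitrary norms $\|\cdot\|_\star$ (not just the trace norm), since the triangle inequality and positive-homogeneity are all that is used, and that the $p_i$ being nonnegative is essential for stripping the absolute value when moving them outside. One could equivalently phrase the last step as an application of Jensen's inequality to the convex function $X \mapsto \|X\|_\star$ on $\lin(\hil)$, but the direct triangle-inequality route is cleaner and keeps the proof to a few lines.
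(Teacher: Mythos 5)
Your proposal is correct and follows essentially the same route as the paper's own proof: substitute the convex decomposition, apply the triangle inequality and positive-homogeneity of $\|\cdot\|_\star$ to pull the $p_i$ outside, then invoke each $\Lambda_i$'s Lipschitz bound and collect the factor of $\|\vec{x} - \vec{y}\|$. Nothing further is needed.
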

\begin{proof}
    Consider an input state $\rho \in \dens(\hil)$.
    Then,
    \begin{align*}
        \| \Lambda(\vec{\theta})[\rho] - \Lambda(\vec{\theta}')[\rho] \|_\star & =
            \left\|
                \sum_i p_i \left(
                    \Lambda_i(\vec{\theta})[\rho] - \Lambda_i(\vec{\theta}')[\rho]
                \right)
            \right\|_\star \\
        & \le
            \sum_i p_i \left(
                \left\|
                        \Lambda_i(\vec{\theta})[\rho] - \Lambda_i(\vec{\theta}')[\rho]
                \right\|_\star
            \right) \\
        & \le
            \sum_i p_i \LConst_i \| \vec{\theta} - \vec{\theta}' \| \\
        & = \bar{\LConst} \| \vec{\theta} - \vec{\theta}' \|.
    \end{align*}
\end{proof}

The above lemmas can then be used to show that $\AGF(\Lambda_T(\vec{\theta}))$ is Lipschitz continuous with constant $\LConst$ when $\Lambda_T(\vec{\theta})$ is Lipschitz continuous in the trace distance with constant $\LConst$, as we formally state in the following theorem.

\begin{theorem}
    \label{thm:agf-continuity}
    Let $\Lambda(\vec{\theta})$ be Lipschitz continuous in the trace distance with constant $\LConst$.
    Then $\AGF(\Lambda(\vec{\theta}))$ is Lipschitz continuous with constant $\LConst$.
\end{theorem}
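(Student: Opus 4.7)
The plan is to unfold the definition of $\AGF$, push the absolute value inside the Haar integral, and then bound the pointwise integrand by the trace-distance Lipschitz hypothesis applied to the pure state $\ket{\psi}\bra{\psi}$.

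Concretely, I would start by writing
\begin{align*}
    |\AGF(\Lambda(\vec{\theta})) - \AGF(\Lambda(\vec{\theta}'))|
        &= \left| \int \dd{\psi} \bra{\psi} \bigl(\Lambda(\vec{\theta}) - \Lambda(\vec{\theta}')\bigr)[\ket{\psi}\bra{\psi}] \ket{\psi} \right| \\
        &\le \int \dd{\psi} \left| \bra{\psi} \bigl(\Lambda(\vec{\theta}) - \Lambda(\vec{\theta}')\bigr)[\ket{\psi}\bra{\psi}] \ket{\psi} \right|
\end{align*}
using linearity of the integral and the triangle inequality. Then for each fixed $\ket{\psi}$, I would invoke the elementary bound $|\bra{\psi} A \ket{\psi}| \le \|A\|_{\mathrm{op}} \le \|A\|_{\Tr}$ (which holds because the operator norm is dominated by the trace norm, and $\bra{\psi}A\ket{\psi}$ is a diagonal matrix element of $A$ for a unit vector). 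Taking $A = (\Lambda(\vec{\theta}) - \Lambda(\vec{\theta}'))[\ket{\psi}\bra{\psi}]$ and noting that $\ket{\psi}\bra{\psi} \in \dens(\hil)$, the Lipschitz-in-trace-distance hypothesis on $\Lambda$ immediately gives
\begin{align*}
    \left\|\bigl(\Lambda(\vec{\theta}) - \Lambda(\vec{\theta}')\bigr)[\ket{\psi}\bra{\psi}]\right\|_{\Tr} \le \LConst \|\vec{\theta} - \vec{\theta}'\|.
\end{align*}

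Finally, since this pointwise bound is independent of $\ket{\psi}$ and the Haar measure is a probability measure ($\int \dd{\psi} = 1$), integrating yields $|\AGF(\Lambda(\vec{\theta})) - \AGF(\Lambda(\vec{\theta}'))| \le \LConst \|\vec{\theta} - \vec{\theta}'\|$, which is the claim. There is no real obstacle here; the one subtlety worth flagging is justifying the inequality $|\bra{\psi} A \ket{\psi}| \le \|A\|_{\Tr}$, which I would state and cite (it follows from the variational characterization of the operator norm combined with the standard ordering $\|A\|_{\mathrm{op}} \le \|A\|_{\Tr}$ on finite-dimensional Hilbert spaces). The appearance of the same constant $\LConst$ (rather than a dimensional blowup) is essentially a consequence of the fact that $\AGF$ is an average of expectation values, so no union-bound loss is incurred.
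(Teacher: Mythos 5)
Your proposal is correct and follows essentially the same route as the paper's own proof: unfold the Haar integral, move the absolute value inside, bound the matrix element $|\braket{\psi|A|\psi}|$ by the trace norm, and apply the Lipschitz hypothesis pointwise before integrating over the probability measure. Your explicit justification of $|\braket{\psi|A|\psi}| \le \|A\|_{\mathrm{op}} \le \|A\|_{\Tr}$ is a small added courtesy that the paper leaves implicit.
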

\begin{proof}
    Recall that
    \begin{align}
        \AGF(\Lambda(\vec{\theta})) &
            \defeq \int \dd\psi \braket{
                \psi | \Lambda\left[ \ket{\psi}\bra{\psi} \right] | \psi
            }, \\
        \intertext{so}
        |\AGF(\Lambda(\vec{\theta})) - \AGF(\Lambda(\vec{\theta}'))| & =
            \left|
                \int \dd\psi \braket{
                    \psi |
                        \Lambda(\vec{\theta})  [\ket{\psi} \bra{\psi}] -
                        \Lambda(\vec{\theta}') [\ket{\psi} \bra{\psi}]
                    | \psi
                }
            \right| \\
        & \le
            \int \dd\psi \left|
                \braket{
                    \psi |
                        \Lambda(\vec{\theta})  [\ket{\psi} \bra{\psi}] -
                        \Lambda(\vec{\theta}') [\ket{\psi} \bra{\psi}]
                    | \psi
                }
            \right| \\
        & \le
            \int \dd\psi \left\|
                \Lambda(\vec{\theta})  [\ket{\psi} \bra{\psi}] -
                \Lambda(\vec{\theta}') [\ket{\psi} \bra{\psi}]
            \right\|_{\Tr} \\
        & \le
            \int \dd\psi\,\LConst\|\vec{\theta} - \vec{\theta}'\| \\
        & = \LConst\|\vec{\theta} - \vec{\theta}'\|.
    \end{align}
\end{proof}

As noted in the introduction, we do not have direct access to $\AGF(\Lambda_T(\vec{\theta}))$, but rather to $\AGF(\Lambda_T(\vec{\theta}) \Lambda_{\rref}(\vec{\theta}))$.
In particular, $\happy(\vec{\theta}) \defeq \AGF(\Lambda_T(\vec{\theta}) \Lambda_{\rref}(\vec{\theta}))$ may be estimated from the interleaved randomized benchmarking parameters:
\begin{subequations}
    \label{eq:rb-parameter-defns}
    \begin{align}
        p(\vec{\theta}) & \defeq \frac{d\happy(\vec{\theta}) - 1}{d - 1}, \label{eq:pDef}\\
        A(\vec{\theta}) & \defeq \Tr(E \Lambda_{\rref}(\vec{\theta})[\rho - \frac{\id}{d}]), \\
        \text{and }
        B(\vec{\theta}) & \defeq \Tr(E \Lambda_{\rref}(\vec{\theta})[\frac{\id}{d}]),
    \end{align}
\end{subequations}
where $d = \operatorname{dim}(\hil)$, $\rho$ is the state prepared at the start of each sequence, and $E$ is the measurement at the end of each sequence.
We consider $A$ and $B$ later, but note for now that up to a factor of $d / (d - 1)$, Lipschitz continuity of $\happy(\vec{\theta})$ immediately implies Lipschitz continuity of $p(\vec{\theta})$.
Thus, we can follow the same argument as above, but using the channel $\Lambda_T (\vec{\theta}) \Lambda_{\rref}(\vec{\theta})$ instead to argue the Lipschitz continuity of experimentally accessible estimates.

\begin{table}
    \begin{center}
        \begin{tabular}{rl}
        \hline
        $G_{0}$ & $\{\openone, H\}$ \\
    $G_{1}$ & $\{S, HS, SH, HSH\}$ \\
    $G_{2}$ & $\{SS, HSS, SHS, SSH, HSHS, HSSH\}$ \\
    $G_{3}$ & $\{SSS, HSSS, SHSS, SSHS, HSHSS, HSSHS, SHSSH, HSHSSH\}$ \\
    $G_{4}$ & $\{SHSSS, SSHSS, HSHSSS, HSSHSS\}$ \\
        \hline
    \end{tabular}
    \end{center}
    \caption{
        \label{tab:example-partition}
        A partitioning of the twirling group $G = \langle H, S\rangle$ based on the number of occurrences of the target gate $T = S$ in the expansion of each element.
    }
\end{table}

We proceed to show the Lipschitz continuity of $\happy$ and hence of $p$ by revisiting the definition \autoref{eq:ref-channel-defn} of $\Lambda_{\rref}$.
In particular, we partition the twirling group as $G = \bigcup_{n = 0}^{\infty} G_n$, where $G_n$ is the set of elements of $G$ whose decomposition into generators $\{T, V_1, \dots, V_{\ell - 1}\}$ requires at least $n$ instances of the target gate $T$.
For instance, if $G = \langle S, H \rangle$ and the target gate is $T = S$, then $Z \in G_2$ since $Z = SS$ is the decomposition of $Z$ requiring the least copies of $S$.
The partition of $G$ in this example is shown as \autoref{tab:example-partition}.

Using this partitioning of $G$, we can define an analogous partition on the terms occuring in the definition of $\Lambda_{\rref}(\vec{\theta})$,
\begin{align}
    \Lambda_{\rref}(\vec{\theta}) & = \sum_{n = 0}^{\infty} \frac{| G_n |}{| G |} \Lambda_{\rref, n}(\vec{\theta}), \\
    \text{where }
    \label{eq:rref-n-expansion}
    \Lambda_{\rref, n}(\vec{\theta}) & \defeq \frac{1}{| G_n |} \sum_{U \in G_n} \Lambda_{U}(\vec{\theta}).
\end{align}

\begin{theorem}
    \label{thm:ref-continuity}
    If $\Lambda_T(\vec{\theta})$ is Lipschitz continuous in the trace distance with constant $\LConst$, then $\Lambda_{\rref,n}(\vec{\theta})$ is Lipschitz continuous in the trace distance with constant $nL$.  Furthermore $\Lambda_{\rref}(\vec{\theta})$ is Lipshitz continuous with constant $\bar{n} \defeq \sum_{n = 0}^{\infty} n \frac{|G_n|}{|G|}$.
\end{theorem}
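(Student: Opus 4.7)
The plan is to chain Corollary \ref{cor:comp-mult-channels} and Lemma \ref{lem:convex-chan-lip} in three stages: first bound the Lipschitz constant of each individual $\Lambda_U$ for $U \in G_n$, then average over $U \in G_n$ to get $\Lambda_{\rref,n}$, and finally average over $n$ to get $\Lambda_{\rref}$.

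For the first stage, fix $U \in G_n$ and expand $\Lambda_U(\vec{\theta})$ using Eq.~(\ref{eq:discrepancy-u}) as a composition of factors of the form $\Lambda_{V_i}(\vec{\theta})\,(V_i\bullet{})$ together with the trailing factor $U^\dagger \bullet{}$. Since $V_i \pperp \vec{\theta}$ for all $i$, both the conjugation channels $V_i \bullet{}$ and $U^\dagger \bullet{}$ are $\vec{\theta}$-independent, hence Lipschitz continuous with constant $0$. By the standing assumption that $\tilde V_i \pperp \vec{\theta}$ for $i > 0$, the discrepancy channels $\Lambda_{V_i}$ for $i>0$ are likewise $0$-Lipschitz. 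The only factors with a nonzero Lipschitz constant are the occurrences of $\Lambda_T$, each Lipschitz continuous with constant $\LConst$ by hypothesis. Because $U \in G_n$ has (by definition of the partition) a decomposition using exactly $n$ copies of $T$, there are exactly $n$ such factors. Applying \autoref{cor:comp-mult-channels} to the whole composition yields that $\Lambda_U(\vec{\theta})$ is Lipschitz continuous in the trace distance with constant $nL$.

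For the second stage, note that $\Lambda_{\rref,n}(\vec{\theta}) = |G_n|^{-1}\sum_{U \in G_n} \Lambda_U(\vec{\theta})$ is a uniform convex combination of channels each Lipschitz with constant $n\LConst$. Applying \autoref{lem:convex-chan-lip} with $p_U = 1/|G_n|$ and $\LConst_U = n\LConst$ immediately gives Lipschitz constant $n\LConst$ for $\Lambda_{\rref,n}$, proving the first claim of the theorem.

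For the third stage, observe that $\Lambda_{\rref}(\vec{\theta}) = \sum_{n=0}^{\infty} \tfrac{|G_n|}{|G|}\Lambda_{\rref,n}(\vec{\theta})$ is itself a convex combination, since the weights $|G_n|/|G|$ are nonnegative and sum to $1$. A second application of \autoref{lem:convex-chan-lip} then produces the Lipschitz constant $\sum_{n=0}^\infty \tfrac{|G_n|}{|G|}\cdot n\LConst = \LConst\,\bar n$ (the theorem's stated constant $\bar n$ appears to omit the factor of $\LConst$, which is recovered here). The only real subtlety — and thus the main thing to be careful about — is the counting step in the first stage: one must fix a canonical minimal decomposition for each $U \in G_n$ so that the number of $\Lambda_T$ factors appearing in the product form of $\Lambda_U$ is unambiguously $n$, which is exactly what the partition in \autoref{tab:example-partition} encodes.
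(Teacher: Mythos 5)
Your proposal is correct and follows essentially the same route as the paper: expand $\Lambda_U$ via Eq.~(\ref{eq:discrepancy-u}), apply \autoref{cor:comp-mult-channels} to count the $\LConst$-Lipschitz factors coming from $\Lambda_T$, and then apply \autoref{lem:convex-chan-lip} to the convex combinations defining $\Lambda_{\rref,n}$ and $\Lambda_{\rref}$. You are also right that the stated constant for $\Lambda_{\rref}$ should read $\bar{n}\LConst$, as the paper's own proof and \autoref{cor:cont-ref-channel} confirm.
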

\begin{proof}
Consider one of the summands from \autoref{eq:rref-n-expansion}, and without loss of generality let $U = V_{i_0} V_{i_1} \cdots V_{i_k}$ for the sequence of integer indices $\vec{i} = (i_0, i_1, \dots, i_k)$. 
Then, by \autoref{eq:discrepancy-u},
\begin{align}
    \Lambda_U(\vec{\theta}) & =
        \Lambda_{V_{i_0}}(\vec{\theta})
        (V_{i_0} \bullet {})
        \cdots
        \Lambda_{V_{i_k}}(\vec{\theta})
        (V_{i_k} \bullet {})
        (U^\dagger \bullet {}).
\end{align}
Note that, $\forall i$, $V_i \pperp \vec \theta$ since these are ideal channels and hence independent of the control vector $\vec \theta$; these channels are Lipschitz continuous in the trace distance with constant $0$.
Further, each $\Lambda_{V_{i}} \pperp \vec \theta$ for $i>0$; these channels are also Lipschitz continuous in the trace distance with constant $0$.
By assumption, we have $\Lambda_{V_{0}}$ is Lipschitz continuous in the trace distance with constant $\LConst$.
Hence, each factor in $\Lambda_U$ is Lipschitz continuous in the trace distance with constant $\LConst$ or $0$, as detailed above.

By \autoref{cor:comp-mult-channels}, $\Lambda_U$ is Lipschitz continuous in the trace distance with constant $mL$, where $m$ counts the number of $0$s in $\vec i$ (corresponding to the number of times the target gate occurs in the decomposition of $U$).
By construction, $m\leq n$, so $\Lambda_U$ is also Lipschitz continuous in the trace distance with constant $nL$.

Using \autoref{lem:convex-chan-lip} to, we now have that $\Lambda_{\rref,n}(\vec{\theta})$ is Lipschitz continuous in the trace distance with constant $\frac{1}{| G_n |} \sum_{U \in G_n} nL = nL$, which is what we wanted to show.

We thus have that $\Lambda_{\rref}(\vec{\theta})$ is Lipschitz continuous in the trace distance with constant $\bar{n}\LConst$, wherein
\begin{align}
    \bar{n} \defeq \sum_{n = 0}^{\infty} n \frac{|G_n|}{|G|}
\end{align}
is the average number of times that the target gate $T$ appears in decompositions of elements of the twirling group $G$.
\end{proof}

\begin{corollary}[Lipschitz continuity of $\Lambda_{\rref}(\vec{\theta})$]
\label{cor:cont-ref-channel}
Let 
\begin{align}
    \bar{n} \defeq \sum_{n = 0}^{\infty} n \frac{|G_n|}{|G|}
\end{align}
be the average number of times that the target gate $V_0$ appears in decompositions of elements of the twirling group $G$.
Then, $\Lambda_{\rref}(\vec{\theta})$ is Lipschitz continuous in the trace distance with constant $\bar{n}\LConst$.
\end{corollary}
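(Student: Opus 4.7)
The plan is to treat this corollary as a direct repackaging of the second half of \autoref{thm:ref-continuity}, which already carried out the work. Rather than redo the computation, I would simply observe that the partition-based expansion
\begin{align}
    \Lambda_{\rref}(\vec{\theta}) = \sum_{n=0}^{\infty} \frac{|G_n|}{|G|}\, \Lambda_{\rref,n}(\vec{\theta})
\end{align}
is a convex combination, because the sets $\{G_n\}$ partition $G$ and so the weights $p_n \defeq |G_n|/|G|$ are nonnegative with $\sum_n p_n = 1$. Since $G$ is finite, only finitely many of the $G_n$ are nonempty, so there are no convergence issues.

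Next, I would invoke the first half of \autoref{thm:ref-continuity}, which establishes that $\Lambda_{\rref,n}(\vec{\theta})$ is Lipschitz continuous in the trace distance with constant $n\LConst$. Feeding this and the weights $p_n$ into \autoref{lem:convex-chan-lip} (applied with the trace norm) immediately yields Lipschitz continuity of $\Lambda_{\rref}(\vec{\theta})$ with constant
\begin{align}
    \bar{\LConst} = \sum_{n=0}^{\infty} p_n \cdot n\LConst = \LConst \sum_{n=0}^{\infty} n \frac{|G_n|}{|G|} = \bar{n}\LConst,
\end{align}
which is exactly the claimed bound.

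There is no substantive obstacle: the only thing to be careful about is matching definitions, namely confirming that the $p_n$ actually sum to one (which follows from the partition property of the $G_n$) and that the interpretation of $\bar{n}$ as an expected number of occurrences of the target gate is consistent between the theorem and the corollary. Because both the convex combination lemma and the per-$n$ bound have already been proved, the corollary follows in one line and is best presented as an immediate consequence rather than with a fresh argument.
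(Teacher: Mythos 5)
Your proposal is correct and matches the paper's own treatment: the paper likewise obtains this as the final step of the proof of \autoref{thm:ref-continuity}, writing $\Lambda_{\rref}(\vec{\theta})$ as the convex combination $\sum_n \frac{|G_n|}{|G|}\Lambda_{\rref,n}(\vec{\theta})$ and combining the per-$n$ constants $n\LConst$ via \autoref{lem:convex-chan-lip} to get $\bar{n}\LConst$. Your added remarks about the weights summing to one and the finiteness of $G$ are harmless bookkeeping that the paper leaves implicit.
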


Combining with the previous argument, we thus have our central theorem.

\begin{theorem}
    \label{thm:pab-continuity}
    Let $\Lambda_T(\vec{\theta})$ be Lipschitz continuous in the trace distance with constant $\LConst$.
    Then, $\happy(\vec{\theta})=\AGF(\theta)$ is Lipschitz continuous with constant $(1 + \bar{n}) \LConst$, and $p(\vec{\theta})$ is Lipschitz continuous with constant $d (1 + \bar{n}) \LConst/ (d - 1) $, and       $A(\vec{\theta})$ and $B(\vec{\theta})$ are Lipschitz continuous with constant $\bar{n} \LConst$.
\end{theorem}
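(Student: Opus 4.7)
The plan is to deduce each of the three assertions in turn from the Lipschitz-continuity machinery already established in this section. I will handle $\happy$ first (a direct assembly of earlier results), then extract $p$ as an affine consequence, and finally address $A$ and $B$ together as linear functionals of $\Lambda_{\rref}$.

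For $\happy(\vec{\theta})$: Corollary~\autoref{cor:cont-ref-channel} states that $\Lambda_{\rref}(\vec{\theta})$ is Lipschitz continuous in the trace distance with constant $\bar{n}\LConst$, and by hypothesis $\Lambda_T(\vec{\theta})$ is $\LConst$-Lipschitz in the trace distance. Lemma~\autoref{lem:compose-chan-lip} then yields that the composition $\Lambda_T(\vec{\theta})\Lambda_{\rref}(\vec{\theta})$ is Lipschitz continuous in the trace distance with constant $(1+\bar{n})\LConst$. Feeding this composite channel into Theorem~\autoref{thm:agf-continuity} gives Lipschitz continuity of $\happy$ with the same constant.

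For $p(\vec{\theta})$: the defining relation \autoref{eq:pDef} makes $p$ an affine function of $\happy$ with slope $d/(d-1)$, so $|p(\vec{\theta}) - p(\vec{\theta}')| = \frac{d}{d-1}|\happy(\vec{\theta}) - \happy(\vec{\theta}')|$ and the $\happy$ bound rescales to $d(1+\bar{n})\LConst/(d-1)$.

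For $A$ and $B$: both have the form $\Tr(E\,\Lambda_{\rref}(\vec{\theta})[\sigma])$ for a fixed Hermitian argument $\sigma$. The route is to combine Hölder's inequality $|\Tr(EY)| \le \|E\|_\infty\|Y\|_{\Tr}$ with $\|E\|_\infty\le 1$ (since $E$ is a POVM element) to reduce everything to trace-norm estimates on $(\Lambda_{\rref}(\vec{\theta}) - \Lambda_{\rref}(\vec{\theta}'))[\sigma]$. For $B$, $\sigma = \id/d \in \dens(\hil)$ is a density operator and Corollary~\autoref{cor:cont-ref-channel} applies verbatim to deliver the bound $\bar{n}\LConst\|\vec{\theta}-\vec{\theta}'\|$. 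For $A$, $\sigma = \rho - \id/d$ is traceless Hermitian rather than a density operator; the natural trick is to split $\sigma = \sigma_+ - \sigma_-$ into its positive and negative parts (each of which, after normalizing by $\Tr\sigma_\pm$, becomes a density operator to which Definition~\autoref{def:lipschitz-channels} may be applied), and then reassemble via the triangle inequality.

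The main obstacle I anticipate is matching the stated constant $\bar{n}\LConst$ exactly in the $A$ case: a naive split as above produces a prefactor $\|\sigma\|_{\Tr}$ which must be carefully tracked to avoid a spurious factor of $2$. Resolving this cleanly will require exploiting the fact that $\Lambda_{\rref}(\vec{\theta}) - \Lambda_{\rref}(\vec{\theta}')$ is Hermiticity- and trace-preserving on its image (it is the difference of two quantum channels), so that the positive and negative parts of its action on $\sigma$ can be combined without doubling. Once that bookkeeping is in place, everything else is a routine chaining of the lemmas already established.
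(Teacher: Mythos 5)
Your treatment of $\happy$, $p$, and $B$ matches the paper's proof step for step: compose $\Lambda_T$ with $\Lambda_{\rref}$ via \autoref{lem:compose-chan-lip} and \autoref{cor:cont-ref-channel}, feed the result to \autoref{thm:agf-continuity}, rescale affinely for $p$, and for $B$ reduce $|\Tr(E\,\cdot\,)|$ to a trace-norm bound using $\|E\|_{\spec}\le 1$ (the paper does this via a singular-value/trace inequality plus H\"older, but it is the same estimate).

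The one place you diverge is $A$, and there you have put your finger on a real issue that the paper simply elides: the paper's proof says the argument ``goes identically for the state $\rho - \frac{\id}{d}$, as we did not use any special properties of $\frac{\id}{d}$,'' but a special property \emph{was} used --- namely $\id/d \in \dens(\hil)$, which is what licenses invoking \autoref{def:lipschitz-channels} at all, whereas $\rho - \id/d$ is traceless Hermitian and not a density operator. Your positive/negative-part decomposition is the right repair, but the bookkeeping you defer does not come out to the stated constant: writing $\sigma = \sigma_+ - \sigma_-$ with $\Tr\sigma_{\pm} = \tfrac12\|\sigma\|_{\Tr}$ and applying the definition to each normalized part yields $\|(\Lambda_{\rref}(\vec{\theta})-\Lambda_{\rref}(\vec{\theta}'))[\sigma]\|_{\Tr} \le \|\sigma\|_{\Tr}\,\bar{n}\LConst\|\vec{\theta}-\vec{\theta}'\|$, and $\|\rho - \id/d\|_{\Tr}$ can be as large as $2(d-1)/d$. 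That factor equals $1$ only for $d=2$; for $d>2$ you obtain a constant strictly larger than $\bar{n}\LConst$ (though never worse than $2\bar{n}\LConst$). Your suggested rescue --- that the difference of reference channels is ``Hermiticity- and trace-preserving on its image'' --- does not close this gap: the difference of two trace-preserving maps is trace-\emph{annihilating}, not trace-preserving, and neither property prevents the two halves of the decomposition from contributing additively. So the obstacle you anticipated is genuine and unresolved in your proposal; it is equally unresolved in the paper, whose bound on $A$ is, as written, only justified up to a dimension-dependent factor of at most $2(d-1)/d$ (and is exactly as claimed in the qubit case used in the paper's examples).
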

\begin{proof}
	
    First, $\happy(\vec{\theta})=\AGF(\Lambda_T (\vec{\theta}) \Lambda_{\rref}(\vec{\theta}))$.
    By assumption, $\Lambda_T(\vec{\theta})$ is Lipschitz continuous with constant $\LConst$, and by \autoref{cor:cont-ref-channel}, $\Lambda_{\rref}(\vec{\theta}))$ is Lipschitz continuous with constant $\bar{n} \LConst$.
    Hence, by \autoref{thm:agf-continuity}, $\happy(\vec{\theta})$ is Lipschitz continuous with constant $(1 + \bar{n}) \LConst$.

    Next, recall that $p(\vec{\theta}) =  \frac{d \happy(\vec{\theta})- 1}{d - 1}$. 
    Then, it follows that $p(\vec{\theta})$ is Lipschitz continuous with constant $\frac{d (1+\bar n) \LConst}{d - 1}$.
   
   For $B(\vec{\theta})$, we have
    \begin{align}
        \label{eq:pab-continuity-step0}
        | B(\vec{\theta}') - B(\vec{\theta}) | & =
            \left|
                \Tr(E \Lambda_{\rref}(\vec{\theta}')[\id / d]) -
                \Tr(E \Lambda_{\rref}(\vec{\theta}) [\id / d])
            \right| \\
        & = \left| \Tr(E (\Lambda_{\rref}(\vec{\theta}') - \Lambda_{\rref}(\vec{\theta})) [\id / d])\right|.
    \end{align}
    Letting $(\epsilon_0,\epsilon_1,...,\epsilon_d)$ be the ordered singular values of $E$ and $(\lambda_0,\lambda_1,...,\lambda_d)$ be the ordered singular values of $(\Lambda_{\rref}(\vec{\theta}') - \Lambda_{\rref}(\vec{\theta})) [\id / d]$, we have 
    \begin{align}
        \label{eq:pab-continuity-step1}
        | B(\vec{\theta}') - B(\vec{\theta}) | 
        \leq \sum_{i=1}^{d} \epsilon_i \lambda_i 
        \leq \max (\epsilon) \sum_{i=1}^{d}\lambda_i 
        = \max(\epsilon) \|(\Lambda_{\rref}(\vec{\theta}') - \Lambda_{\rref}(\vec{\theta})) [\id / d]\|_{\Tr} 
        \leq \bar{n} \LConst,
    \end{align}
    Since $E$ and $C$ are both Hermitian, $EC$ is also Hermitian, and thus $\|EC\|_{\Tr} = \Tr(|EC|) \ge |\Tr(EC)|$.
    The argument is completed by H\"older's inequality~\cite{wat_theory_2018}, which states that for all $X$ and $Y$, $\|XY\|_{\Tr} \le \|X\|_{\Tr} \|Y\|_{\spec}$, where $\|\cdot\|_{\spec}$ is the spectral norm (\emph{a.k.a.} the induced $(2 \to 2)$-norm or Schatten $\infty$-norm).
    In particular, we note that since $E$ is a POVM effect, $\|E\|_{\spec} \le 1$, such that $\|EC\|_{\Tr} \le \|C\|_{\Tr} \le 1$.

Finally, we note that this argument goes identically for the state $\rho - \frac{\id}{d}$, as we did not use any special properties of $\frac{\id}{d}$.
Hence, we also have that $| A(\vec{\theta}') - A(\vec{\theta}) | \leq\bar{n} \LConst$.
\end{proof}

We are thusly equipped to return to the problem of estimating $\happy(\vec{\theta} + \vec{\delta\theta})$ from experimental data concerning $\happy(\vec{\theta})$.

\begin{theorem}\label{thm:lipshitz}
    Suppose that $f(\vec{\theta}, \vec{y})$ is a Lipschitz continuous function of $\vec{\theta}$ with constant $\LConst$ where $y$ is a variable in a measurable set $S$ with corresponding probability distribution on that set of $\Pr(\vec{y})$ and for any function $g:S\mapsto \mathbb{R}$ define $\mathbb{E}_{\vec{y}}(g(\vec{y})) = \int_{S} g(\vec{y}) \Pr(\vec{y})\dd\vec{y} $ and $\Var_{\vec{y}}(g(\vec{y})) = \mathbb{E}_{\vec{y}} \big(g(\vec{y}) - \mathbb{E}_{\vec{y}}(g(\vec{y}) \big)^2$.
    For all $\vec{\theta}$ and $\vec{\theta}'$ such that $\LConst \|\vec{\theta}' - \vec{\theta}\| < \sqrt{\Var_{\vec{y}}(f(\vec{\theta}, \vec{y})})$, it holds that
    \begin{align}
        \Var_{\vec{y}}[f(\vec{\theta}', \vec{y})]
            & \le
                \Var_{\vec{y}}[f(\vec{\theta}, \vec{y})]
                \left(
                    1 + \frac{
                        2 \LConst \|\vec{\theta}' - \vec{\theta}\|
                    }{
                        \sqrt{\Var_{\vec{y}}[f(\vec{\theta}, \vec{y})]}
                    }
                \right).
    \end{align}
\end{theorem}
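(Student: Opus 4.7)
The plan is to treat $\sqrt{\Var}$ as an $L^2$-style seminorm so that we can apply a triangle inequality in $\vec{y}$, and then leverage the uniform Lipschitz bound on $f$ in $\vec{\theta}$ to control the discrepancy between $f(\vec{\theta}', \vec{y})$ and $f(\vec{\theta}, \vec{y})$ pointwise.

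First I would introduce the shorthand $g(\vec{y}) \defeq f(\vec{\theta}', \vec{y}) - f(\vec{\theta}, \vec{y})$ and $\epsilon \defeq \LConst \|\vec{\theta}' - \vec{\theta}\|$. The Lipschitz hypothesis on $f(\cdot, \vec{y})$ is pointwise in $\vec{y}$, so $|g(\vec{y})| \le \epsilon$ for every $\vec{y} \in S$. This immediately gives $\mathbb{E}_{\vec{y}}[g(\vec{y})^2] \le \epsilon^2$, and hence $\Var_{\vec{y}}[g(\vec{y})] \le \epsilon^2$.

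Next I would note that for any pair of square-integrable random variables $X$ and $Y$, the Minkowski inequality applied to $X - \mathbb{E}[X]$ and $Y - \mathbb{E}[Y]$ yields $\sqrt{\Var[X+Y]} \le \sqrt{\Var[X]} + \sqrt{\Var[Y]}$. Applying this with $X = f(\vec{\theta}, \vec{y})$ and $Y = g(\vec{y})$, so that $X + Y = f(\vec{\theta}', \vec{y})$, gives
\begin{align*}
    \sqrt{\Var_{\vec{y}}[f(\vec{\theta}', \vec{y})]} \;\le\; \sqrt{\Var_{\vec{y}}[f(\vec{\theta}, \vec{y})]} + \sqrt{\Var_{\vec{y}}[g(\vec{y})]} \;\le\; \sqrt{\Var_{\vec{y}}[f(\vec{\theta}, \vec{y})]} + \epsilon.
\end{align*}

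Squaring both sides yields $\Var_{\vec{y}}[f(\vec{\theta}', \vec{y})] \le \Var_{\vec{y}}[f(\vec{\theta}, \vec{y})] + 2\epsilon\sqrt{\Var_{\vec{y}}[f(\vec{\theta}, \vec{y})]} + \epsilon^2$. The final step is to invoke the hypothesis $\epsilon < \sqrt{\Var_{\vec{y}}[f(\vec{\theta}, \vec{y})]}$ to absorb the residual $\epsilon^2$, rewrite the right-hand side as $\Var_{\vec{y}}[f(\vec{\theta}, \vec{y})]\bigl(1 + 2\epsilon/\sqrt{\Var_{\vec{y}}[f(\vec{\theta}, \vec{y})]}\bigr)$ (possibly with a mild adjustment of the leading constant), and unfold the shorthand $\epsilon = \LConst \|\vec{\theta}' - \vec{\theta}\|$ to match the claimed form.

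The only delicate point is justifying the Minkowski/triangle step; this is a one-line Cauchy--Schwarz argument via $\Var[X+Y] = \Var[X] + \Var[Y] + 2\Cov(X,Y)$ with $|\Cov(X,Y)| \le \sqrt{\Var[X]\Var[Y]}$, so no real obstacle arises. The more delicate bookkeeping step is controlling the $\epsilon^2$ remainder under the stated hypothesis so that the bound packages cleanly as a single multiplicative factor involving $2\LConst\|\vec{\theta}' - \vec{\theta}\|/\sqrt{\Var}$; this is where I would expect to need the hypothesis $\LConst\|\vec{\theta}' - \vec{\theta}\| < \sqrt{\Var_{\vec{y}}[f(\vec{\theta}, \vec{y})]}$ to do actual work.
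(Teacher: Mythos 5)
Your decomposition is the same one the paper uses: writing $f(\vec{\theta}',\vec{y}) = f(\vec{\theta},\vec{y}) + g(\vec{y})$ with $|g| \le \epsilon \defeq \LConst\|\vec{\theta}'-\vec{\theta}\|$ and expanding the variance of the sum is exactly your Minkowski step, since $\Var[X+Y] = \Var[X]+\Var[Y]+2\,\mathrm{Cov}(X,Y) \le \bigl(\sqrt{\Var[X]}+\sqrt{\Var[Y]}\bigr)^2$ by Cauchy--Schwarz; the paper just phrases $g$ as $\epsilon\, c(\vec{\theta},\vec{\theta}',\vec{y})$ with $|c|\le 1$. Up to the inequality $\Var_{\vec{y}}[f(\vec{\theta}',\vec{y})] \le \bigl(\sqrt{V}+\epsilon\bigr)^2 = V + 2\epsilon\sqrt{V}+\epsilon^2$, where $V \defeq \Var_{\vec{y}}[f(\vec{\theta},\vec{y})]$, your argument is correct, and that intermediate bound is sharp.

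The step you flag as delicate is where the real issue sits, and it is not bookkeeping you can make go away. Under the hypothesis $\epsilon < \sqrt{V}$ the best absorption available is $\epsilon^2 < \epsilon\sqrt{V}$, which yields $\Var_{\vec{y}}[f(\vec{\theta}',\vec{y})] \le V\bigl(1 + 3\epsilon/\sqrt{V}\bigr)$ --- a constant of $3$, not $2$. The paper's own proof reaches $2$ only because its second displayed inequality replaces $2\epsilon\,\mathrm{Cov}_{\vec{y}}(f,c)$ by $\epsilon\sqrt{V}$ rather than $2\epsilon\sqrt{V}$, i.e.\ it silently drops a factor of $2$ when applying Cauchy--Schwarz. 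The constant $2$ in the theorem statement is in fact unattainable: take $y$ uniform on $\{\pm 1\}$ and $f(\theta,y) = (1+\theta)y$, so that $\LConst = 1$, $V = 1$ at $\theta = 0$, and $\Var_y[f(\delta,y)] = (1+\delta)^2 = 1 + 2\delta + \delta^2 > 1+2\delta$ for any $0<\delta<1$ satisfying the hypothesis. So your instinct that the leading constant needs a ``mild adjustment'' is exactly right: your route proves the theorem with $3$ in place of $2$ (equivalently, with the sharper conclusion $\Var_{\vec{y}}[f(\vec{\theta}',\vec{y})] \le (\sqrt{V}+\epsilon)^2$), which is the corrected form of what the paper actually establishes.
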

\begin{proof}
    Note that since $f$ is Lipschitz continuous as a function of $\vec{\theta}$,
    \begin{align}
    \left| f(\vec{\theta}', \vec{y}) - f(\vec{\theta}, \vec{y}) \right| \leq \LConst \|\vec{\theta}' - \vec{\theta}\|,
    \end{align}
    so there exists a function $c$ such that $|c(\vec{\theta}, \vec{\theta}', \vec{y})| \le 1$ for all $\vec{\theta}$, $\vec{\theta}'$ and $\vec{y}$:
    \begin{align}
        f(\vec{\theta}', \vec{y})
            & =
                f(\vec{\theta}, \vec{y}) +
                \LConst \|\vec{\theta}' - \vec{\theta}\| c(\vec{\theta}, \vec{\theta}', \vec{y}).
    \end{align}
    Thus, $\Var_{\vec{y}}[c] \le 1$, and by addition of variance, we have that
    \begin{align}
        \Var_{\vec{y}}[f(\vec{\theta}', \vec{y})]
                        & = \Var_{\vec{y}}[f(\vec{\theta}, \vec{y})]+\LConst^2 \|\vec{\theta} - \vec{\theta}'\|^2 \Var_{\vec{y}}(c(\vec{\theta}, \vec{\theta}', \vec{y}))\! +\! 2\LConst \|\vec{\theta} - \vec{\theta}'\| \text{Cov}_{\vec{y}}(f(\theta,\vec{y}),c(\theta,\theta',\vec{y}))\nonumber\\
            & \le
                \Var_{\vec{y}}[f(\vec{\theta}, \vec{y})] +
                \LConst^2 \|\vec{\theta} - \vec{\theta}'\|^2 +
                \LConst \|\vec{\theta} - \vec{\theta}'\| \sqrt{\Var_{\vec{y}}(f(\vec{\theta}, \vec{y})}.\nonumber\\
                & \le
                \Var_{\vec{y}}[f(\vec{\theta}, \vec{y})] +
               2 \LConst \|\vec{\theta} - \vec{\theta}'\| \sqrt{\Var_{\vec{y}}(f(\vec{\theta}, \vec{y})}.
    \end{align}
    The result then follows from elementary algebra.
\end{proof}

\subsection{Examples}

\begin{figure}
    \begin{center}
        \includegraphics[width=0.95\linewidth]{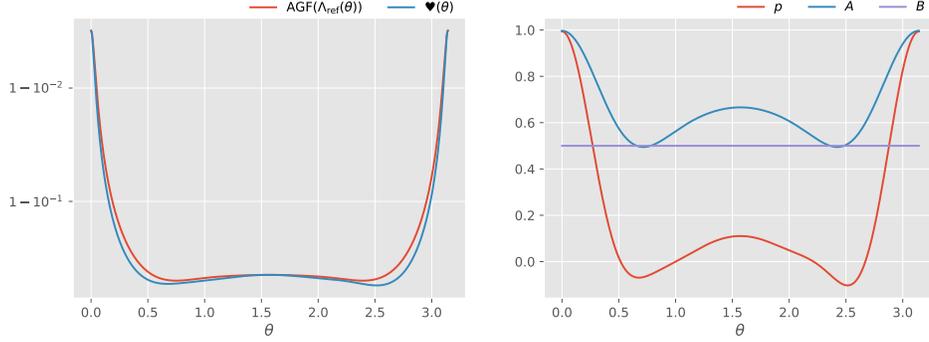}
    \end{center}
    \caption{
        \label{fig:unitary-overrotation-objective}
        The objective function $\happy(\theta)$ and the average gate fidelity versus the overrotation angle $\theta$ for \autoref{ex:unitary-overrotation} is given in the left figure.  The right figure gives the calculated RB parameters as a function of $\theta$ where the optimal solution $\theta=0$ is unknown to the optimizer a priori.
    }
\end{figure}

\begin{example}[Lipschitz Continuity of Unitary Overrotation]
    \label{ex:unitary-overrotation}
    
        \newcommand{\unitaryOverrotationObjectiveLipschitz}{1.48}
        \newcommand{\unitaryOverrotationRefAGFLipschitz}{1.13}

    Consider $G = \langle S, H \rangle$, where $T = S$ is the target gate.
    For a control parameter vector consisting of a single overrotation parameter $\vec{\theta} = (\delta\theta)$, suppose that $\Lambda_T[\rho] = (\e^{-\ii\,\delta\theta\,\sigma_z}) \bullet \rho$.
    Since this is a unitary channel, its Choi--Jami\l{}kowski rank\footnote{Sometimes informally called a ``Kraus rank.''} is 1.
    Thus, the AGF of $\Lambda_T$ can be calculated as the trace \cite{nie_simple_2002,hhh_general_1999,emerson_scalable_2005}
    \begin{align}
        \AGF(\Lambda_T(\delta \theta)) = \frac{ |\Tr(e^{-\ii\,\delta\theta\,\sigma_z})|^2 + 2 }{4 + 2}
            = \frac23 + \frac13 \cos(2\,\delta\theta).
    \end{align}
    On the other hand, $\happy(\delta\theta)$ isn't as straightforward, and so we will consider its Lipschitz continuity instead.
    To do so, we note that for all $\rho \in \dens(\mathbb{C}^2)$, we wish to bound the trace norm
    \begin{align}
        \Delta & = \| \Lambda_T(\delta\theta)[\rho] - \Lambda_T(\delta\theta')[\rho] \|_{\Tr}.
    \intertext{
        Expanding $\rho$ in the unnormalized Pauli basis as $\rho = \id / 2 + \vec{r} \cdot \vec{\sigma} / 2$, we note that since $\Lambda_T(\delta\theta)[\id] = \id$ and $\Lambda_T(\delta\theta)[\sigma_z] = \sigma_z$ for all $\delta\theta$, the above becomes 
    }
        \Delta & =
            \frac12 \| \Lambda_T(\delta\theta)[r_x \sigma_x + r_y \sigma_y +r_z \sigma_z] - \Lambda_T(\delta\theta')[r_x \sigma_x + r_y \sigma_y+r_z \sigma_z] \|_{\Tr} \\
&=            \frac12 \| \Lambda_T(\delta\theta)[r_x \sigma_x + r_y \sigma_y] - \Lambda_T(\delta\theta')[r_x \sigma_x + r_y \sigma_y] \|_{\Tr} \\
        & =
            4 |\sin(\delta\theta - \delta\theta')| \sqrt{r_x^2 + r_y^2} \\
        & \le
            4 |\sin(\delta\theta - \delta\theta')| \\
        & \le 4 |\delta\theta - \delta\theta'|,
    \end{align}
    where the last line follows from that $|\sin(x)| \le |x|$.
    Thus, we conclude that $\Lambda_T$ is Lipschitz continuous in the trace distance with constant 4.

    We can then find $\bar{n}$ for occurrences of $T$ in decompositions of elements of $G$ to find the Lipschitz constant for $\happy(\delta\theta)$ in this example.
    In particular, as shown in the Supplementary Material, $\bar{n} = 13 / 6$ for the presentation of the Clifford group under consideration, such that $\happy$ is Lipschitz continuous with constant $(d / (d - 1)) \times 4 \times (19 / 6) = 76 / 3$ in this case.

    We note that a more detailed analysis of the Lipschitz continuity of $\Lambda_T$ or a presentation of $G$ that is less dense in $T$ would both yield smaller Lipschitz constants for $\happy$, and hence better reuse of prior information.
    Thus by \autoref{thm:pab-continuity}, a change in overrotation of approximately $1 / 100$ the current standard deviation in $\happy$ would result in at most a doubling of the current standard deviation.

   We can easily include the effects of noise in other generators in numerical simulations.
    In particular, suppose that $\Lambda_H$ is a depolarizing channel with strength $0.5\%$.
    Then, simulating $\happy(\vec{\theta})$ for this case shows that $\happy$ is Lipschitz continuous with a constant of approximately \unitaryOverrotationObjectiveLipschitz, as illustrated in \autoref{fig:unitary-overrotation-objective}.
\end{example}

\section{Approximate Bayesian Inference}

An important implication of \autoref{thm:lipshitz} is that the uncertainty quantified by the variance of the posterior distribution yielded by Bayesian inference grows by at most a constant factor.
However, while the theorem specify how the variance should grow in the worst case scenario it does not give us an understanding of what form the posterior distribution should take.
Our goal in this section is to provide an operationally meaningful way to think about how the posterior distribution evaluated at $\vec{\theta}$ changes as the control parameters transition to $\vec{\theta'}$.

Let the posterior probability distribution for the objective function $\happy$ evaluated at parameters $\vec{\theta}$ be $\Pr\left(\happy(\vec{\theta})\right)$.
In practice, we do not generally estimate the objective function $\happy$ directly, but estimate $\happy$ from a latent variable $\vec{y}$, such as the RB parameters \autoref{eq:rb-parameter-defns}.
Marginalizing over this latent variable, we obtain the Bayesian mean estimator for $\happy$,

\begin{equation}
    \label{eq:marginalized-bme-happy}
    \hat{\happy} =
        \int \happy \Pr\left(\happy | \theta \right) \dd\happy
    =
        \int \happy \Pr\left(\happy | \theta, \vec{y} \right) \Pr(\vec{y}) \dd\vec{y}.
\end{equation}
For the RB case in particular, the objective function $\happy$ does not depend on the control parameters $\vec{\theta}$ if we know the RB parameters $\vec{y}$ exactly.
That is, we write that $\happy \pperp \vec{\theta} | \vec{y}$ for the RB case, such that $\Pr(\happy | \vec{\theta}, \vec{y}) = \Pr(\happy | \vec{y})$.
Moreover, $\Pr(\happy | \vec{y})$ is a $\delta$-distribution supported only at $\happy = (dp + 1) / (d + 1)$ where $\vec{y} = (p, A, B)$.
We may thus abuse notation slightly and write that $\happy = \happy(\vec{y})$ is a deterministic function.
Doing so, our estimator simplifies considerably, such that
\begin{equation}
    \label{eq:simplified-bme-happy}
    \hat{\happy} =
        \int \happy \Pr\left(\happy | \theta, \vec{y} \right) \Pr(\vec{y}) \dd\vec{y}
    =
        \int \happy(\vec{y}) \Pr(\vec{y}) \dd \vec{y}.
\end{equation}

In exact Bayesian inference, the probability density $\Pr(\vec{y})$ is an arbitrary distribution, but computation of the estimator \autoref{eq:simplified-bme-happy} is in general intractable.
Perhaps the most easily generalizable distribution is the sequential Monte Carlo (SMC) approximation \cite{dj_tutorial_2011}, also known as a particle filter, which attempts to approximate the probability density as
\begin{equation}
    \Pr\left(\happy | \theta, \vec{y} \right) \Pr(\vec{y} | \vec{\theta}) =
        \Pr(\happy, \vec{y} | \vec{\theta})
    \approx
        \sum_{j=1}^{N_p} w_j \delta(\vec{y} - \vec{y}_j) \delta(\happy_i - \happy),
\end{equation}
where $\delta$ is the Dirac-delta distribution and $\sum_j w_j =1$.
This representation is convenient for recording on a computer, as it only needs to store $(w_i, \vec{y}_i, \happy_i)$ for each particle.
If $\happy = \happy(\vec{y})$ is a deterministic function of the RB parameters then we need not even record $\happy$ with each particle, such that
\begin{equation}
    \Pr\left(\happy | \theta, \vec{y} \right) \Pr(\vec{y} | \vec{\theta}) \approx
        \sum_{j=1}^{N_p} w_j \delta(\vec{y} - \vec{y}_j) \delta(\happy(\vec{y}) - \happy).
\end{equation}

More generally, the SMC approximation allows us to approximate expectation values over the probability distribution using a finite number of points, or particles, such that the expectation value of any continuous function can be approximated with arbitrary accuracy as $N_p \rightarrow \infty$.
In particular, we can approximate the estimator $\hat{\happy}$ within arbitrary accuracy.

The uncertainty (mean squared error) of this estimator is given by the posterior variance,
\begin{equation}
    \mathbb{V}(\happy) =
        \int \happy^2 \Pr(\happy | \theta, \vec{y}) \Pr(\vec{y}) \dd\vec{y} -
        \hat{\happy}^2.
\end{equation}
The posterior variance can be computed as the variance over the variable $\vec{y}$ induced from the sequential Monte Carlo approximation to the probability distribution,
\begin{equation}
    \mathbb{V}(\happy) \approx
        \sum_i w_i \happy(\vec{y}_i)^2 - \left(\sum_i w_i \happy(\vec{y}_i)\right)^2,
\end{equation}
where we have assumed that $\happy \pperp \vec{\theta} | \vec{y}$ and that $\Pr(\happy | \vec{\theta})$ is a $\delta$-distribution, as in the RB case.
This observation is key to our implementation of Bayesian ACRONYM tuning.

A final note regarding approximate Bayesian inference is that the learning process can be easily implemented.
From~\autoref{eq:Bayes} if $\Pr(\happy|\theta, \vec{y}) \Pr(\vec{y}) = \sum_{j=1}^{N_p} w_j \delta(\vec{y}-\vec{y}_j)$ and if evidence $E$ is obtained in an experiment, then Bayes' theorem when applied to the weights $w_j$ yields
\begin{equation}
    w_j \gets \frac{\Pr(E | \vec{y_j}) w_j}{\sum_j \Pr(E | \vec{y_j}) w_j}.
\end{equation}
This update procedure is repeated iteratively over all data that is collected from a set of experiments.
In practice, if an accurate estimate is needed then an enormous number of particles may be needed because the weights shrink exponentially with the number of updates.
This causes the effective number of particles in the approximation to shrink exponentially and with it the accuracy of the approximation to the posterior.
We can address this by moving the particles to regions of high probability density.
In practice, we use a method proposed by \citet{lw_combined_2001} to move the particles but other methods exist and we recommend reviewing~\cite{dj_tutorial_2011,granade2017structured, hincks2018bayesian} for more details.
Here, we will use the implementation of particle filtering and Liu--West resampling provided by the QInfer package \cite{granade2017qinfer}.

\subsection{Reusing Priors from Nearby Experiments}

We have argued above that the posterior variance of the probability distribution is Lipshitz continuous, which allows us to reason that the variance of the probability distribution at most expands by a fixed multiplicative constant when transitioning information between different points.
Operationally though, it is less clear how we should choose the posterior distribution over the average gate fidelity in Bayesian ACRONYM training given prior information at a single point.
\autoref{thm:pab-continuity} provides us with an intuition that can be used for this: each element in the support of the probability distribution is shifted by at most a fixed amount that is dictated by the Lipshitz constants for the channels.
Here, we build on this intuition by showing that the prior at each step in a Bayesian ACRONYM tuning protocol can be related to the previous step in terms of the Minkowski sum and convex hull.

\begin{definition}[Convex hull]
    \label{def:convex-hull}
    Let $A$ be a set of vectors.
    Then the convex hull of $A$, written $\Conv(A)$ is the smallest convex set containing $A$,
    \begin{align}
        \Conv(A) \defeq \left\{
            \lambda \vec{a} + (1 - \lambda) \vec{b} :
            \vec{a}, \vec{b} \in A,
            0 \le \lambda \le 1
        \right\}.
    \end{align}
\end{definition}

\begin{definition}[Minkowski sum]
    \label{def:minkowski-sum}
    Let $A$ and $B$ be sets of vectors.
    Then the Minkowski sum $A + B$ is defined as the convolution of $A$ with $B$,
    \begin{align}
        A + B \defeq \left\{
            \vec{a} + \vec{b} : \vec{a} \in A, \vec{b} \in B
        \right\}.
    \end{align}
\end{definition}

With these concepts in place we can now state the following Corollary, which can be used to define a sensible prior distribution for $\vec{y}{(\vec{\theta} + \vec{\delta\theta})}$ given a posterior distribution for $\vec{y}(\vec{\theta})$.

\begin{corollary}
    \label{cor:prior-reuse}
    Let $\Lambda_T(\vec{\theta})$ be Lipshitz continuous in the trace distance with constant $\LConst$, and let $\Pr(\vec{y} | \vec{\theta})$ be a probability distribution over the RB parameters $\vec{y} = (p, A, B)$ for $\Lambda_T$ evaluated at some particular $\vec{\theta}$.
    Then, for any $\vec{\delta\theta} \in \mathbb{R}^n$, let
    \begin{align}
        \Delta & \defeq
            \|\vec{\delta\theta}\|, \\
        D & \defeq
            \Biggr\{ \pm \Delta \frac{d \LConst (1 + \bar{n})}{d - 1} \Biggr\} \times
            \Biggr\{ \pm \Delta (1 + \bar{n}) \LConst \Biggr\} \times
            \Biggr\{ \pm \Delta (1 + \bar{n}) \LConst \Biggr\}, \\
        \textrm{and }
        \Pr(\vec{y} | \vec{\theta} + \vec{\delta \theta}) & \defeq
            \frac{1}{8}\sum_{\vec{s} \in S} \Pr(\vec{y} - \vec{s} | \vec{\theta}).
    \end{align}
    The following statements then hold:
    \begin{enumerate}
        \item $\Pr(\vec{y} | \vec{\theta} + \vec{\delta\theta})$ is a valid prior probability distribution for $\vec{y}(\vec{\theta} + \vec{\delta\theta})$.
        \item $\hat{y} = \int \vec{y} \Pr(\vec{y} | \vec{\theta}) \dd\vec{y} = \int \vec{y} \Pr(\vec{y} | \vec{\theta} + \vec{\delta\theta}) \dd\vec{y}$.
        \item If $\Pr(\vec{y} | \vec{\theta})$ has support only on $A \subset \mathbb{R}^3$, then $\Pr(\vec{y}|\vec{\theta} + \vec{\delta\theta})$ has support only on $\Conv(A + D)$.
        \item If $\vec{y}_{\true}(\theta) \in A$ then $\vec{y}_{\true}(\theta+\delta\theta)\in \Conv(A + D)$.
    \end{enumerate}
\end{corollary}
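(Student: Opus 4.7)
The plan is to verify the four claims in sequence, each of which follows cleanly from properties of the construction, with Theorem~\ref{thm:pab-continuity} doing the essential work only for claim~4.

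For claim~1, I would observe that $\Pr(\vec{y} \mid \vec{\theta} + \vec{\delta\theta})$ is a convex combination of the translates $\Pr(\vec{y} - \vec{s} \mid \vec{\theta})$, each of which is non-negative and (by translation invariance of Lebesgue measure) integrates to~$1$. Since the coefficients $1/8$ sum to~$1$, the result is a valid probability density. For claim~2, I would use the translation rule for means: a change of variable $\vec{y}' = \vec{y} - \vec{s}$ gives $\int \vec{y}\,\Pr(\vec{y} - \vec{s} \mid \vec{\theta})\,\dd\vec{y} = \hat{y} + \vec{s}$. Averaging over $\vec{s} \in D$ then yields $\hat{y} + \tfrac{1}{8}\sum_{\vec{s} \in D} \vec{s}$. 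The key observation is that $D$ is the set of $8$ sign combinations of a fixed vector, and hence is symmetric about the origin, so the average vanishes and the mean is preserved.

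For claim~3, each translated density $\Pr(\vec{y} - \vec{s} \mid \vec{\theta})$ has support $A + \{\vec{s}\}$, so the support of the mixture is contained in $\bigcup_{\vec{s} \in D} (A + \{\vec{s}\}) = A + D \subseteq \Conv(A + D)$. Claim~4 is where the real work lives. I would first invoke Theorem~\ref{thm:pab-continuity} componentwise: writing $\vec{y} = (p, A, B)$, the three components have Lipschitz constants $d(1+\bar n)\LConst/(d-1)$, $\bar n \LConst$, and $\bar n\LConst$ respectively. Actually, since we need to match the definition of $D$, I would use the looser Lipschitz constants $d(1+\bar n)\LConst/(d-1)$ and $(1+\bar n)\LConst$ already built into $D$. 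It then follows that $\vec{y}_{\true}(\vec{\theta} + \vec{\delta\theta}) - \vec{y}_{\true}(\vec{\theta})$ lies inside the axis-aligned box with half-widths prescribed by those constants, which is precisely $\Conv(D)$ since $D$ is the vertex set of that box.

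The main obstacle — and it is mild — is the final containment $\vec{y}_{\true}(\vec{\theta}) + \Conv(D) \subseteq \Conv(A + D)$ when $A$ itself need not be convex. To dispatch this, I would note that for any $\vec{a} \in A$ and any $\vec{c} = \sum_i \lambda_i \vec{d}_i \in \Conv(D)$ with $\lambda_i \ge 0$ and $\sum_i \lambda_i = 1$, we can write $\vec{a} + \vec{c} = \sum_i \lambda_i (\vec{a} + \vec{d}_i)$, which is a convex combination of points in $A + D$, and therefore lies in $\Conv(A + D)$. Applying this with $\vec{a} = \vec{y}_{\true}(\vec{\theta}) \in A$ and $\vec{c} = \vec{y}_{\true}(\vec{\theta} + \vec{\delta\theta}) - \vec{y}_{\true}(\vec{\theta}) \in \Conv(D)$ completes the proof.
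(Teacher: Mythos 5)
Your proof is correct, and the mathematical ingredients are the same as the paper's: non-negativity and normalization of the mixture for claim~1, the sign-symmetry of $D$ about the origin for claim~2, the componentwise Lipschitz bounds from \autoref{thm:pab-continuity} placing the displacement of $\vec{y}_{\true}$ inside the box $\Conv(D)$, and a Minkowski-sum/convex-hull argument to land in $\Conv(A+D)$. The one place you genuinely diverge is in the division of labor between claims~3 and~4. You prove claim~3 directly from the construction --- the support of the finite mixture $\frac18\sum_{\vec{s}\in D}\Pr(\vec{y}-\vec{s}\mid\vec{\theta})$ is exactly $\bigcup_{\vec{s}\in D}(A+\{\vec{s}\}) = A+D \subseteq \Conv(A+D)$, no continuity needed --- and reserve the Lipschitz machinery entirely for claim~4, where you establish $\vec{a}+\Conv(D)\subseteq\Conv(A+D)$ by writing $\vec{a}+\sum_i\lambda_i\vec{d}_i=\sum_i\lambda_i(\vec{a}+\vec{d}_i)$. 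The paper instead routes claim~3 through the Lipschitz continuity of $(p,A,B)$ and then gets claim~4 from the identity $\Conv(\supp(\vec{y}\mid\vec{\theta})+D)=\Conv(\supp(\vec{y}\mid\vec{\theta}))+\Conv(D)$. Your organization is arguably the tidier one, since claim~3 as stated concerns only the constructed prior and not the true parameters; your explicit remark that $D$ is built from the looser constants $(1+\bar{n})\LConst$ rather than the $\bar{n}\LConst$ that \autoref{thm:pab-continuity} actually gives for $A$ and $B$ is also a detail the paper leaves implicit.
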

\begin{proof}
    The proof of the first claim is trivial and follows immediately from the fact that $\Pr(\vec{y} | \vec{\theta})$ is a probability distribution.
    The proof of the second claim is also straightforward.
    Note that
    \begin{align}
        \hat{\vec{y}}
            \defeq \int \vec{y}
                \Pr(\vec{y} | \vec{\theta} + \vec{\delta\theta}) \dd\vec{y}
            = & \frac18 \int \sum_{\vec{s} \in \{\vec{y}\} + D}
                \vec{y} \Pr(\vec{y} - \vec{s} | \vec{\theta}) \dd\vec{y} \nonumber \\
            = & \frac18 \int \sum_{\vec{s} \in \{\vec{y}\} + D}
                (\vec{y} + \vec{s}) \Pr(\vec{y} | \vec{\theta}) \dd\vec{y} \nonumber \\
            = & \int \vec{y}
                \Pr(\vec{y} | \vec{\theta}) \dd\vec{y}.
    \end{align}

    To consider the third claim, let $\vec{c} = (c_p, c_A, c_B)$ be a vector such that $|c_p| \le dL(1+\bar{n})/(d-1)$ and $\max\{ |c_A|, |c_B|\} \le \LConst(1 + \bar{n})$.
    The convex hull $\Conv(D)$ consists of a convex region of identical dimensions.
    Since the set is convex it then follows that $\vec{c} \in \Conv(D)$.

    Put differently, we can express \autoref{def:lipschitz-functions} and \autoref{def:lipschitz-channels} in terms of the Minkowski sum, such that
    \begin{align}
        \vec{y}(\Lambda_T(\vec{\theta} + \vec{\delta\theta})) \in
            \Conv\left(
                \{\vec{y}(\Lambda_T(\vec{\theta})\} +
                D
            \right).
    \end{align}
    Taking the union over all vectors $\vec{a}$ in the support of $\Pr(\vec{y} | \vec{\theta})$, we obtain that
    \begin{align}
        \supp(\vec{y} | \vec{\theta} + \vec{\delta\theta}) \subseteq
            \Conv\left(
                \supp(\vec{y} | \vec{\theta}) +
                D
            \right).
    \end{align}
    From the linearity of convex hulls under Minkowski summation,
    \begin{equation}
        \Conv(\supp(\vec{y} | \vec{\theta}) + D) = \Conv(\supp(\vec{y} | \vec{\theta})) + \Conv(D).\label{eq:samething}
    \end{equation}
    The fourth and final statement then immediately follows from~\autoref{eq:samething}.
\end{proof}

This shows that if we follow the above rule to generate a prior distribution for the RB parameters at $\vec{\theta} + \vec{\delta\theta}$ then the resultant distribution does not introduce any bias into the current estimate of the parameters, which is codified by the mean of the posterior distribution.
We also have that if the true model is within the support of the prior distribution at $\vec{\theta}$ then it also will be at $\vec{\theta} + \vec{\delta\theta}$.
This is important because it states that we can use the resulting distribution to give a credible region for the RB parameters.
Thus this choice of prior is well justified and furthermore if the measurement process reduces the posterior variance faster than it expands when $\vec{\theta}$ is updated, it will allow us to get very accurate estimates of the true RB parameters without needing to extract redundant information.

\section{Numerical Experiments}
The above analysis shows that, under assumptions of Lipshitz continuity of the likelihood function, the posterior distribution found at a given step of the algorithm can be used to provide a prior for the next step.  This holds provided that we form a new prior that expands the variance of the posterior distribution.  

While the above analysis shows that prior information can be reused in theory, we will now show in practice that this ability to re-use prior information can reduce the information needed to calibrate a simulated quantum device.
The Clifford gates in the device, which we take to be the generators of the single-qubit Clifford group, are $H$ and $S$.  
We assume that $H$ can be implemented exactly but that $S$ has an over-rotation error such that
\begin{equation}
    S(\theta)= e^{-i \theta Z} S,
\end{equation}
for some value of $\theta$.  While this is called an ``over-rotation'' we make no assumption that $\theta>0$. 
We further apply depolarizing noise at a per-gate level to the system with strength $0.005$ meaning that we apply the channels
\begin{align}
    \Lambda_H &: \rho \mapsto 0.995 H \rho H + 0.005 (\openone/2),\nonumber\\
    \Lambda_{S(\theta)} &: \rho \mapsto 0.995 e^{-i \theta Z} S \rho S^\dagger e^{i\theta Z} + 0.005 (\openone/2).\label{eq:channel}
\end{align}

We assume that the user has control over the parameter $\theta$ but we do not assume that they know the functional form and thus do not know that setting $\theta=0$ will yield optimal performance.
The goal of our Bayesian ACRONYM algorithm is then to allow the method to discover that $\theta=0$ yields the optimal performance via local search.

\begin{figure}[t!]
    \begin{center}
        \includegraphics[width=0.8\linewidth]{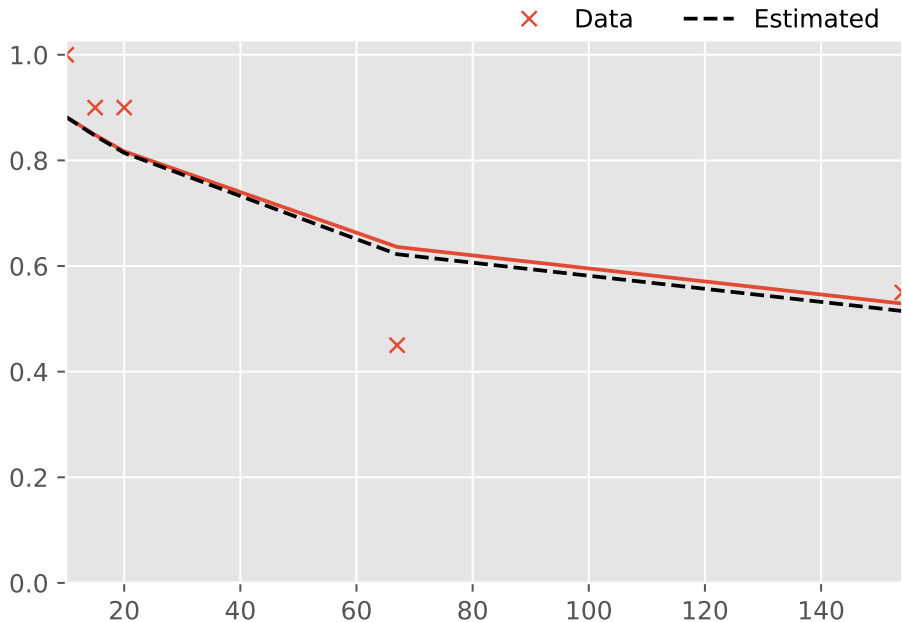}
    \end{center}
    \caption{
        \label{fig:unitary-overrotation-perf}
       Observed survival probabilities as a function of sequence lengths using $20$ measurements (shots) per length for an overrotation model with $\theta=0.04$.  
       Solid orange line represents the true value for the survival probability,  $(A-B)p^\LConst +B$, as a function of the sequence length $\LConst$ and the dashed line represents the estimate of the survival probability.  
       The prior was set to be uniform for $p$ and $A$ on $[0,1]$ and the prior $B$ was set to be the normal distribution~$\mathcal{N}(0.5,0.05^2)$.
    }
\end{figure}

\autoref{fig:unitary-overrotation-perf} shows the impact that using Bayesian inference to estimate RB parameters can have in data limited cases of the over-rotation problem.  Specifically, we apply Bayesian ACRONYM training to calibrate the over--rotation to within an error of $0.005$ which is equal to the dephasing error that we included in the channels in~\autoref{eq:channel}.
A broad prior was taken and despite the challenges that we would have learning a good model from least-squares fitting, we are able to accurately learn the survival probability.
 We can then learn the parameters $A$, $B$ and $p$, the latter of which gives us the average gate fidelity needed for ACRONYM training via~\autoref{eq:pDef}.
 As the required accuracy for the estimate of $p$ increases, the advantages gleaned from using Bayesian methods relative to fitting disappear~\cite{hwf+_bayesian_2018}. 
 However, in our context this observation is significant because we wish to tune the performance of quantum devices in the small data limit rather than the large data limit and use prior information from previous experiments to compensate.
 
 Local search is implemented using SPSA with learning rate $0.05$, a step of $0.05$ used to compute approximate gradients and a maximum step size of $0.1$.
We repeat the method until the posterior variance in the average gate fidelity is less than $0.005^2$.
We use a Lipshitz constant of $1.48$, which was numerically computed as a bound to give an appropriate amount of diffusion for the posterior distribution during an update.  
Bayesian inference is approximated using a particle filter with $256~000$ particles and Liu--West resampling with a resample threshold of $1/256$ as implemented by QInfer~\cite{granade2017qinfer}.
Single shot experiments are used with a maximum number of sequences of $500$ per set of parameters.

Perhaps the key observation is that throughout the tuning process the true parameters for the overrotation error remain within the $70\%$ credible region reported by QInfer, which suggests if anything that the credible region is pessimistic.
The estimate of $\happy$ also closely tracks the true throughout the learning process and also the amount of data required for the tuning process is minimal, less than $1$ kB.

\begin{figure}[t!]
    \begin{center}
        \includegraphics[width=\linewidth]{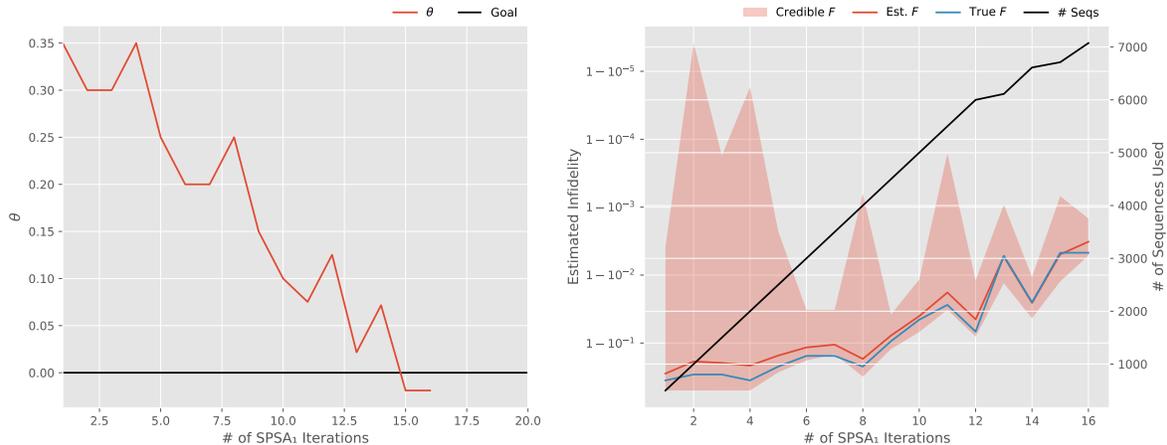}
    \end{center}
    \caption{
        \label{fig:unitary-overrotation-perf}
       Over-rotation angle and objective function values for an over-rotation model with a $0.35$ radian over-rotation initially with a target error of $0.005$ in $F$ as measured by the posterior standard-deviation.  (Left) Over-rotation angle as a function of number of iterations of SPSA taken.  (Right) Estimated Average gate infidelity as a function of the number of SPSA iterations and the total number of sequences used to achieve that level of infidelity.  The shaded region represents a $70\%$ credible region for the infidelity.
    }
\end{figure}

\section{Conclusion}
The main result of our work is to show that, under weak assumptions of Lipshitz continuity, Bayesian inference can be used to piece together evidence gained from experiments at nearby experimental settings to accelerate learning of optimal control parameters for quantum devices.  
We further demonstrate the success of this approach numerically by using a Bayesian ACRONYM tuning protocol (BACRONYM) to tune a rotation gate that suffers from an unknown overrotation.
We find that by use of evidence from nearby experimental settings for the gate, we can learn optimal controls with fewer than $1$ kilobit of data which is a reduction of nearly a factor of $20$ relative to the best known non-Bayesian approach~\cite{kelly2014optimal}.

Looking forward, there are a number of ways in which this work can be built upon.
Firstly, upper bounds on the Lipshitz constant and variance are needed to properly use evidence from nearby points within the optimization loop; however, tight estimates are not known a priori for either quantity.
Finding approaches that yield useful empirical bounds would be an important contribution beyond what we provide here.
Secondly, an experimental demonstration of Bayesian ACRONYM tuning would be useful to demonstrate the viability of such tuning parameters in real-world applications.
Finally, while we have picked SPSA as an optimizer for convenience, there may be better choices within the literature.
This raises an interesting issue because the number of times that the objective function needs to be queried is not the best metric when information is reused.
This point is important not just for choosing the best optimizer to minimize experimental costs for tuning hardware, it also potentially reveals a new way of optimizing parameters in variational quantum eigensolvers~\cite{peruzzo2014variational}, as well as QAOA~\cite{farhi2014quantum} and quantum machine learning algorithms~\cite{schuld2018circuit}.

\begin{acknowledgments}
    This project was prepared using a reproducible workflow \cite{granade_reproducible_2017}.
\end{acknowledgments}

\nocite{apsrev41Control}
\bibliographystyle{apsrev4-1}
\bibliography{apsrev-control,bayesian-acronym}

\appendix

\section{Pseudocode for BACROYNM Tuning}

\begin{algorithm}[H]
    \caption{\label{alg:bacronym}
        Bayesian ACRONYM tuning procedure
    }
    \begin{algorithmic}        
        \Function{BACRONYM}{}
            \Arguments
                \State $\vec{\theta}_0$: initial control parameters
                \State $n_{\text{shots}}$: number of measurements per seq. length
                \State $\sigma_{\text{req}}$: required accuracy for $\happy$
                \State $(a, b, s, t)$: SPSA1 parameters
                \State largest allowed step in the parameter $\vec{\theta}$
                \State $\happy_{\text{target}}$: target objective function value
                \State $\pi_0$: initial prior
                \State $\LConst$: Lipschitz continuity assumed for $\happy$
            \EndArguments
            \seccomment{Initialization}
            \State $\pi \gets \pi_0$, $\vec{\theta} \gets \vec{\theta}_0$
            \State collect RB data at $\vec{\theta}$ until $\Var[\happy] \le \sigma_{\text{req}}^2$
            \State $\hat{\happy} \gets \expect[\happy(\vec{\theta}) | \text{data}]$
            \State $i_{\text{iter}} \gets 0$
            \seccomment{Main body}
            \While{$\hat{\happy} \le \happy_{\text{target}}$}
                \State $i_{\text{iter}}+\!\!+$
                \seccomment{SPSA1}
                \State $\vec{\Delta} \gets$ a random $\pm1$ vector the same length as $\vec{\theta}$
                \State $\mathrm{step} \gets a / (1 + i_{\text{iter}}^{s})$
                \State $\mathrm{gain} \gets b / (1 + i_{\text{iter}}^{t})$
                \State $\vec{\delta\theta} \gets \mathrm{step} \cdot \vec{\Delta}$
                \State estimate $\hat{\happy}(\vec{\theta} + \vec{\delta\theta})$ using \autoref{cor:prior-reuse}
                \State $\vec{u} \gets \mathrm{gain} \cdot \vec{\Delta} (\hat{\happy}(\vec{\theta} + \vec{\delta\theta}) - \hat{\happy}(\vec{\theta}))$
                \If{any component of $\vec{u}$ larger than max update}
                    \State $\vec{u} \gets \vec{u} / \max_{u \in \vec{u}} |u|$
                \EndIf
                \If{$|\hat{\happy}(\vec{\theta} + \vec{\delta\theta}) - \hat{\happy}(\vec{\theta})| \ge \Var[\happy(\vec{\theta} + \vec{\delta\theta})]$}
                    \State $\vec{\theta} +\!\!= \vec{u}$
                    \inlinecomment{Complete the SPSA step.}
                \ElsIf{$\hat{\happy}(\vec{\theta} + \vec{\delta\theta}) < \hat{\happy}(\vec{\theta})$}
                    \State $\vec{\theta} -\!\!= \mathrm{step} \cdot \vec{\Delta}$
                \Else
                    \State $\vec{\theta} +\!\!= \mathrm{step} \cdot \vec{\Delta}$
                \EndIf
            \EndWhile
            \seccomment{Final estimate}
            \State \Return $\vec{\theta}, \hat{\happy}$
        \EndFunction
    \end{algorithmic}
\end{algorithm}
\end{document}